\DeclareMathOperator*{\argmin}{argmin}
\newtheorem{remark}{Remark}
\newtheorem{theorem}{Theorem}
\newtheorem{lemma}{Lemma}
\newtheorem{problem}{Problem}
\newtheorem{proposition}{Proposition}
\newtheorem{definition}{Definition}
\newcommand{\real}{\ensuremath{\mathbb{R}}}
\newcommand{\realpositive}{\ensuremath{\mathbb{R}}_{>0}}
\newcommand{\realnonnegative}{\ensuremath{\mathbb{R}}_{\ge 0}}
\newcommand{\until}[1]{\{1,\dots, #1\}}
\newcommand{\subscr}[2]{#1_{\textup{#2}}}
\newcommand{\supscr}[2]{#1^{\textup{#2}}}
\newcommand{\longthmtitle}[1]{\mbox{}{\text{(#1).}}}
\newcommand\oprocendsymbol{\hbox{$\bullet$}}
\newcommand\oprocend{\relax\ifmmode\else\unskip\hfill\fi\oprocendsymbol}
\title{\LARGE \bf
Risk-perception-aware control design under dynamic spatial risks
}
\author{Aamodh Suresh and Sonia Mart{\'\i}nez
  \thanks{A.~Suresh and S.~Mart{\'\i}nez are with the Department of
  	Mechanical and Aerospace Engineering, University of California at
  	San Diego, La Jolla, CA 92093, USA {\tt\small
  		\{aasuresh,soniamd\}@eng.ucsd.edu}}%
  \thanks{We gratefully acknowledge support from ONR grant ONR - N00014-19-1-2471. }  }
\begin{document}

\tikzstyle{block} = [rectangle, draw, fill=blue!20, 
text width=5em, text centered, rounded corners, minimum height=2.5em]
\tikzstyle{bigblock} = [rectangle, draw, fill=blue!20, 
text width=7.5em, text centered, rounded corners, minimum height=2.5em]    
\tikzstyle{line} = [draw, -latex']
\tikzstyle{cloud} = [draw, ellipse,fill=red!20, node distance=3cm,
minimum height=2em]

\maketitle
\thispagestyle{empty}
\pagestyle{empty}

\begin{abstract}
 This work proposes a novel risk-perception-aware
 (RPA) control design using non-rational perception of risks
 associated with uncertain dynamic spatial costs. We use Cumulative
 Prospect Theory (CPT) to model the risk perception
 of a decision maker (DM) and use it to construct perceived
 risk functions that transform the uncertain dynamic spatial
 cost to deterministic perceived risks of a DM. These risks
 are then used to build safety sets which can represent risk-averse
 to risk-insensitive perception. We define a notions of
 “inclusiveness” and “versatility” based on safety sets and use
 it to compare with other models such as Conditional value at
 Risk (CVaR) and Expected risk (ER). We theoretically prove
 that CPT is the most “inclusive” and “versatile” model of the
 lot in the context of risk-perception-aware controls. We further
 use the perceived risk function along with ideas from control
 barrier functions (CBF) to construct a class of perceived risk
 CBFs. For a class of truncated-Gaussian costs, we find sufficient
 geometric conditions for the validity of this class of CBFs, thus
 guaranteeing safety. Then, we generate perceived-safety-critical
 controls using a Quadratic program (QP) to guide an agent
 safely according to a given perceived risk model. We present
 simulations in a 2D environment to illustrate the performance
 of the proposed controller.
\end{abstract}

\section{INTRODUCTION}
\label{sec:introduction}

\emph{Motivation:} Safety is a desirable and necessary design
constraint for any control system; specially when operated in a shared
environment with a decision maker (DM). 
Arguably, most environments have associated spatial risks, whose
source can vary from hard constraints (e.g.~moving obstacles) to softer
constraints (e.g.~wind conditions). Different DMs can perceive these
risks differently, leading to notions of perceived risks and perceived
safety from these risks.

It is well known from psychophysics~\cite{SSS:70} and behavioral
economics~\cite{AT-DK:92} research that humans as DMs have
fundamental non-linear perception leading to non-rational decision
making in risky situations. 
In such cases, existing methods assuming perfect knowledge or
rational and coherent treatment (as in expected risk and Conditional Value at Risk
(CVaR)) of risks may not suffice, which can lead to loss of trust or
discomfort among DMs.
This motivates the need of richer and more inclusive modeling of risk
perception to capture a variety of DMs and use them for safe control
design. This work aims to bridge the gap between behavioral decision
making and safety using Cumulative Prospect Theory (CPT) as a risk
perception model, and Control Barrier Functions (CBFs) for safe
control design.

\emph{Related Work:} Safe control system design has been tackled using
various frameworks such as artificial potential
functions~\cite{OK:90}, barrier certificates~\cite{SP-AJ:04} and, more
recently, control barrier
functions (CBFs)~\cite{ADA-SC-ME-GN-KS-PT:19}. 
CBFs have gained popularity due to their Lyapunov-like properties,
rigorous safety guarantees and ease of
application. 
They have been successfully used in
optimization~\cite{ADA-SC-ME-GN-KS-PT:19},
stabilization~\cite{PO-JC:19-cdc} and data-driven control
frameworks~\cite{BTL-JJS-JPH:21}. CBFs were traditionally used in
static scenarios, more recently, they have been used to deal with
moving obstacles~\cite{YC-HP-JG:18} and multi-agent
systems~\cite{PG-JC-ME:21-tac}.

Uncertainty has been mainly handled using robustness
measures~\cite{SP-AJ-GJP:07}, stochastic control~\cite{AC:19}, or
chance constraints~\cite{MJK-VD-MF-NA:20}. Very few works have
considered the notion of risk perception explicitly in a control
system~\cite{SS-IY:18,MA-XX-ADA:21}. 
All these works use CVaR 
to quantify risk perception, which only captures linear and rational
risk-averse behavior. CPT on the other hand is a more expressive
(see~\cite{AS-SM:21-ral}), non-linear and non-rational perception
theory which is yet to be applied in the context of safety for a
control system. Moreover, CPT has been successfully used in
engineering applications like path planning~\cite{AS-SM:21-ral},
traffic routing~\cite{SG-EF-MBA:10}, and network
protection~\cite{ARH-SS:19}. 

\emph{Contributions:} We first adapt the notion of non-rational risk
perception to the context of safety for control systems. With this, we
capture a larger spectrum of DM's risk profile, extending the existing
literature. We support this claim theoretically by defining the notion
of ``inclusiveness'' and proving that CPT is the most inclusive risk
perception model out of the other popular models: CVaR and ER.
We then use the CPT value function to construct a class of CBFs to
guarantee safety according to a DM's perceived risk and define the
notion of perceived safety. Additionally, we find sufficient geometric
conditions on the control input to maintain the validity of our
proposed RPA CBF and compare them among the three risk perception models (RPMs). Then, we design a QP-based RPA controller to guide
an agent to a desired goal safely w.r.t.~perceived risks. Thus we
extend the literature with more inclusive safe control
design. Practically, we consider 2D simulations with moving obstacles
and show the effectiveness of the proposed RPA controller along with
the practical translation of the inclusiveness heirarchy.

This work 
provides a framework to 
incorporate and compare a wide range of RPMs
to generate a variety of RPA controls. We would also like to clarify
that the validation of CPT models using user studies for typical
control scenarios is beyond the scope of this work.

\section{Risk perception formalism and problem setup}
\label{sec:rpm}
Here, we introduce some notation\footnote{The
	Euclidean norm in $\real^n$ is denoted by $\|.\|$. 
	We use $\mathbb{E}$ as
	the expectation operator on a random variable. The set
	$\mathbf{B}^{r}(y) \triangleq \{x \in \mathcal{X}| \|x-y\| \leq r\}$
	is a ball of radius $r$ centered at $y$.} and a formal notion of
risk perception, starting with a concise description of
CPT and CVaR (see~\cite{SD:16} and~\cite{RTR-SU:00} for more
details). Later, we describe our problem statement.

\emph{Risk Perception:} By risk perception, we refer to the notion of
attaching a value (risk) to a random cost output.  Formally, let
$\mathcal{S}$ be a discrete sample space endowed with a probability
distribution $\mathbb{P}$. We model environmental cost via a
real-valued, discrete random variable $c:\mathcal{S} \rightarrow
\realnonnegative$, taking $M$ possible values, $c_i \in
\realnonnegative$, $i \in \until{M}$, and such that $p_i =
\mathbb{P}(c = c_i)$, with $\sum_i^M p_i=1$. We Let $\mathcal{C}$ be
the set of such random cost variables and $R:\mathcal{C} \rightarrow
\realnonnegative $ a value function which associates a value (risk) to
a random cost variable.

A value function $R$ can be defined in many ways, resulting in
different risk perceptions. Here, an RPM is characterized as a
parameterized 
family $\mathcal{M} \triangleq \{R_{\Theta}| \Theta \in \real^l\}$ of
value functions.
In what follows, we consider three popular RPMs: Expected Risk
(ER) 
Conditional Value at Risk\footnote{The CVaR model uses a class of
	value functions parameterized by $q \in [0,1]$ to represent expectation over a fraction ($q$) of the worst-case
	outcomes.
	Thus the CVaR value with $q = 1$ 
	is the worst-case outcome of $c$, $c_M$.
	While, with $q = 0$ 
	CVaR value equals ER ($\supscr{R}{CV}_0=\supscr{R}{e}$).}
(CVaR)~\cite{RTR-SU:00}
and Cumulative Prospect Theory (CPT)~\cite{AT-DK:92}.

CPT captures
non-rational decision making, and was introduced
in~\cite{SD:16,DP:98}. In CPT, outcomes are first weighed
using a non-linear utility function $v: \realnonnegative \rightarrow
\realnonnegative$, with $v(c)=\lambda c^\gamma$,
modeling a DM's perceived cost. The parameters $\lambda \in
[1,\infty), \gamma \in [0,1]$ represent ``risk aversion'' and ``risk
sensitivity'', respectively.  In addition, a non-linear probability
weighing function $w :[0,1] \rightarrow [0,1]$, given by
$w(p)=e^{-\beta(-\log p)^{\alpha}}$ and $w(0)=0$, is used to model
uncertainty perception. Here, uncertainty sensitivity is  tuned via the
parameters $\alpha,\beta \in
\realpositive$. 
CPT also suggests that probabilities are perceived via decision
weights $\Pi_i \in [0,1]$, which are calculated in a cumulative
fashion. Defining a partial sum function as $S_i(M) \triangleq
\sum_{j=i}^M p_j$, $\forall j \in \{1,...,M\}$ and $S_0(M) \triangleq
0$, we have $\Pi_j =w\circ S_j(M) - w \circ S_{j-1}(M). $

With this, assigning the parameter $q$ for CVaR and
$\theta=\{\alpha,\beta,\gamma,\lambda \}$ for CPT, the value functions
of ER ($\supscr{R}{ER}$), CVaR ($\supscr{R}{CV}$) and CPT ($\supscr{R}{cpt}$) of a DM are defined as:
\begin{subequations}\label{eqn:risk_def}
	\begin{align}
		\supscr{R}{ER}(c)\triangleq&\ \mathbb{E}(c)=\sum_{i=1}^M c_ip_i, \label{eqn:exp_risk}\\
		\supscr{R}{CV}_q(c)\triangleq&\ \mathbb{E}\left[c|c\geq \min \{d:\mathbb{P}(c \leq d)\geq q\} \right],\label{eqn:cvar_risk}\\
		\supscr{R}{cpt}_\theta(c)\triangleq&\ \sum_{j=1}^M (v \circ c_j) \Pi_j .
		\label{eqn:cpt_risk}
	\end{align}
\end{subequations} 
In CPT, $\theta$ can be varied to generate different value functions
pertaining to various risk profiles of DMs (from
risk-taker to risk-averse). We refer
to~\cite{AS-SM:21-ral,SD:16} for more details on the parameter
choices in
CPT.
\textit{Risky Environment:} Consider a compact state space
$\mathcal{X} \subset \real^n$ containing dynamic spatial sources of
risk at $y \in \mathcal{X}$ and an agent or robot at a state $x \in
\mathcal{X}$. The relative state space is $\mathcal{Z}\triangleq
\{\xi = y-x| x \in \mathcal{X}, \, y \in \mathcal{X}\}$. 
Our starting point is an uncertain cost field
$c: \mathcal{Z} \rightarrow \realnonnegative$,
that aims to quantify objectively the (negative) consequences of being
at $x \in \mathcal{X}$ relative to a known risk source at $y \in
\mathcal{X}$. More precisely, $c(\xi)$ is a discrete RV
which can take $M$ possible values, $c_i(\xi) \in \realnonnegative$,
for $i \in \until{M}$. 
We assume that $c$ has associated mean and standard
deviation functions $c_\mu: \mathcal{Z} \rightarrow
\realnonnegative$ and $c_\sigma: \mathcal{Z} \rightarrow
\realnonnegative$, respectively. 
We assume
that $c_\mu,c_\sigma$ are continuously differentiable in their
domains.
Given $c$, an associated spatial-risk function is given by
$R_c:\mathcal{Z} \rightarrow \realnonnegative$, $R_c(\xi) \equiv
R(c(\xi))$, where $R$ belongs to any of the previous RPMs defined in
\eqref{eqn:risk_def} above. When clear from the context, we will
identify $R_c \equiv R \in \mathcal{M}$.
The larger $R_c$ is at $\xi$, the higher the perceived risk of being
at $x \in \mathcal{X}$

\emph{Dynamic systems:} We aim to control an agent modeled as a control-affine dynamic system:
\begin{equation}
	\dot{x}=f_x(x,u)=f(x) + G(x)u, 
	\label{eqn:agent_dyn}
\end{equation}
where $u \in \real^m, \ G : \mathcal{X} \rightarrow \real^{n \times
	m},\ f : \mathcal{X} \rightarrow \real^n$ and $f$ and $G$ are
locally Lipschitz.  We also consider a dynamic risk
\begin{equation}
	\dot{y}=f_y(y), \ y \in \mathcal{X}, \ f_y : \mathcal{X} \rightarrow \real^{n},
	\label{eqn:obs_dyn}
\end{equation}  
with a locally Lipschitz $f_y$. We focus on moving obstacles as the
source of risk, but the approach can be extended to other scenarios.
We also assume that a asymptotically stable controller $k:\mathcal{X}
\rightarrow \real^m$
has been designed to guide the agent to a goal state $x^* \in
\mathcal{X}$
in the absence of risk sources.
We wish to drive the agent to a goal $x^*\in \mathcal{X}$ safely,
while avoiding risky areas. Formally, 
we define safety
considering a perceived spatial risk function $R_c$ as follows:

\begin{definition}(Perceived Safety) An agent moving
	under~\eqref{eqn:agent_dyn}, 
	and subject to an uncertain cost source $c$ with
	dynamics~\eqref{eqn:obs_dyn}, is said to be safe w.r.t.~the
	perceived risk $R_c$
	iff $R_c(\xi(t)) \leq \rho$,  $\forall \, t \ge 0$, for some
	tolerance $\rho \in \realpositive$.  
	\label{def:safety}	
\end{definition}
We now state the problems we address in this work:
\begin{problem}(RPA safe sets) Given a risky environment
	$\mathcal{X}$, endowed with 
	an uncertain cost $c$, design perceived safety sets considering RPMs
	from \eqref{eqn:risk_def}. Characterize and constrast
	the properties of these sets among the three RPMs.
	\label{prob:CPT_safety_set}
\end{problem} 
\vspace*{-0.3cm}
\begin{problem}(RPA safe controls) Under previous conditions, design a
	controller $u$, nominally deviating from a stable state feedback
	controller $k$, such that the agent reaches the goal $x^*$ safely
	(Definition~\ref{def:safety}) and examine feasibility of $u$.
	\label{prob:CPT_CBF}
\end{problem}


\section{Perceived Safety using various RPMs}
\label{sec:risk_perception}
This section compares various RPMs, solving
Problem~\ref{prob:CPT_safety_set}.
Given an uncertain field cost $c$, we apply the different risk
perception models (see Section~\ref{sec:rpm}) to obtain the
corresponding fields, $R_c$.
With this, let us define the
following sets:
\vspace*{-0.5ex}
\begin{subequations}\label{eqn:safety_set}
	\begin{align}
		\subscr{\mathcal{X}}{safe} (R_c;y) =& \{x \in \mathcal{X} | R_c(y-x) \leq \rho \}, \label{eqn:safety_set1} \\
		\subscr{\mathcal{X}}{risky} (R_c; y) =& \{x \in \mathcal{X} | R_c(y-x) >
		\rho \} \label{eqn:safety_set3}.
	\end{align}
\end{subequations}
In particular, these sets depend on the choice of $R_c$ from
\eqref{eqn:risk_def}.  Given $\mathcal{M}$, we define the range set
$\mathcal{R}_\mathcal{M}(c )\subset \realnonnegative$ associated with 
$\mathcal{M}$ wrt $c$ as the set
$\mathcal{R}_{\mathcal{M}}(c) \triangleq\{r \in \real|r= R_\Theta(c),
\forall R_{\Theta} \in \mathcal{M} \}$\footnote{When clear from the
	context, we will just denote $\mathcal{R}_\mathcal{M}(c) \equiv
	\mathcal{R}$.}.
Fix a model $\mathcal{M}$ and a risk source at
$y \in \mathcal{X}$. The total safe set of $\mathcal{M}$ wrt $y$ is
given as $\mathcal{Y}_{\hspace*{-0.6ex} \mathcal{M}}(y,c) \triangleq \bigcup_{R_c \in
	\mathcal{M}} \subscr{\mathcal{X}}{safe}(R_c;y) $ (resp.~the total risky
set of $\mathcal{M}$ wrt $y\in \mathcal{X}$ is
$\overline{\mathcal{Y}}_{\hspace*{-1ex}\mathcal{M}}(y,c) \triangleq \bigcup_{R_c \in
	\mathcal{M}} \subscr{\mathcal{X}}{risky}(R_c;y)$). 
Thus, given $y \in \mathcal{X}$, the set
$\mathcal{Y}_{\hspace*{-0.6ex}\mathcal{M}}(y,c)$
(resp.~$\overline{\mathcal{Y}}_{\hspace*{-1ex}\mathcal{M}}(y,c))$ covers all
the states in $\mathcal{X}$ that safe (resp.~unsafe) according to a RPM
$\mathcal{M}$.

\begin{definition}\longthmtitle{Inclusiveness and Strict Inclusiveness}
	\label{def:inclusiveness}
	Consider two RPMs $\mathcal{M}_1$ and $\mathcal{M}_2$, a
	threshold $\rho \in \realpositive$, and a risk source at $y
	\in \mathcal{X}$.  Let the sets 
	$\mathcal{Y}_1(y,c),\,\overline{\mathcal{Y}}_1(y,c)$ and
	$\mathcal{Y}_2(y,c),\, \overline{\mathcal{Y}}_2(y,c)$ be the
	total safe and risky sets of $\mathcal{M}_1$ and
	$\mathcal{M}_2$ wrt $y$ and a spatial cost $c$, respectively.
	We say that $\mathcal{M}_1$ is more inclusive than
	$\mathcal{M}_2$ ($\mathcal{M}_1 \vartriangleright
	\mathcal{M}_2$) if either $\overline{\mathcal{Y}}_2
	(y,c)\subseteq \overline{\mathcal{Y}}_1(y,c)$ and
	$\mathcal{Y}_2(y,c) \subsetneq \mathcal{Y}_1(y,c)$ holds, or
	$\overline{\mathcal{Y}}_2 (y,c)\subsetneq
	\overline{\mathcal{Y}}_1(y,c)$ and $\mathcal{Y}_2(y,c)
	\subseteq \mathcal{Y}_1(y, c)$ holds, for all $y \in
	\mathcal{X}$ and costs $c:\mathcal{Z} \rightarrow
	\realnonnegative$.
	If $\overline{\mathcal{Y}}_2(y,c) \subsetneq
	\overline{\mathcal{Y}}_1(y,c)$ and $\mathcal{Y}_2
	(y,c)\subsetneq \mathcal{Y}_1(y,c)$ both hold, then
	$\mathcal{M}_1$ is strictly more inclusive than
	$\mathcal{M}_2$ ($\mathcal{M}_1 \blacktriangleright
	\mathcal{M}_2$).
\end{definition} In particular, if $\mathcal{M}_1 \triangleright
\mathcal{M}_2$, then $\mathcal{M}_1$ results into a wider range
of safety and risky sets for a given environment than
$\mathcal{M}_2$.

Now we compare the inclusiveness of CPT, CVaR and ER via their
respective value functions. We start by comparing the range
space of these RPMs.

\begin{lemma}
	\label{lem:rangespace}
	Consider a threshold $\rho \in \realnonnegative$, a risk source at
	$\bar{y} \in \mathcal{X}$, and two RPMs $\mathcal{M}_1,\,
	\mathcal{M}_2$ with range spaces $\mathcal{R}_1,\,
	\mathcal{R}_2$, respectively.
	If $\mathcal{R}_2 (c)\subseteq \mathcal{R}_1(c)$, and
	if there exists an $R_{1,c} \in \mathcal{M}_1$ such that $R_{1,c} >
	R_{2,c}$ or $R_{1,c} < R_{2,c}$ for any $R_{2,c} \in \mathcal{M}_2$, and
	any $c$, then $\mathcal{M}_1 \vartriangleright
	\mathcal{M}_2$. In addition, if there are $R_{1,c}^a, R_{1,c}^b \in
	\mathcal{M}_1$ such that $R_{1,c}^a > R_{2,c}^a $ and $R_{1,c}^b <
	R_{2,c}^b $,  $\forall \, R_{2,c}^a,R_{2,c}^b \in \mathcal{M}_2$, and any  $c$,
	then $\mathcal{M}_1 \blacktriangleright \mathcal{M}_2$.
\end{lemma}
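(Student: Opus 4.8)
The plan is to read off both non-strict inclusions $\mathcal{Y}_2(y,c)\subseteq\mathcal{Y}_1(y,c)$ and $\overline{\mathcal{Y}}_2(y,c)\subseteq\overline{\mathcal{Y}}_1(y,c)$ directly from the range-space hypothesis, and then use the strict pointwise comparison of a distinguished value function in $\mathcal{M}_1$ to upgrade one (or both) of them to a strict inclusion, so that exactly one (or both) of the disjuncts in Definition~\ref{def:inclusiveness} is met. Throughout I fix an arbitrary $y\in\mathcal{X}$ and an arbitrary cost $c$.

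First I would argue pointwise from $\mathcal{R}_2(c)\subseteq\mathcal{R}_1(c)$. If $x\in\mathcal{Y}_2(y,c)$ then $R_{2,c}(y-x)\le\rho$ for some $R_{2,c}\in\mathcal{M}_2$; since the number $R_{2,c}(y-x)$ lies in $\mathcal{R}_2(c(y-x))\subseteq\mathcal{R}_1(c(y-x))$, there is $R_{1,c}\in\mathcal{M}_1$ with $R_{1,c}(y-x)=R_{2,c}(y-x)\le\rho$, hence $x\in\subscr{\mathcal{X}}{safe}(R_{1,c};y)\subseteq\mathcal{Y}_1(y,c)$. The identical argument with the inequality reversed gives $\overline{\mathcal{Y}}_2(y,c)\subseteq\overline{\mathcal{Y}}_1(y,c)$. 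Thus the ``$\subseteq$'' halves of both disjuncts in Definition~\ref{def:inclusiveness} hold already from the range-space inclusion.

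Next I would obtain the strict containment. Recall $x\in\mathcal{Y}_\mathcal{M}(y,c)$ iff $R_c(y-x)\le\rho$ for some $R_c\in\mathcal{M}$, and $x\in\overline{\mathcal{Y}}_\mathcal{M}(y,c)$ iff $R_c(y-x)>\rho$ for some $R_c\in\mathcal{M}$. Suppose first there is a ``risk-taking'' witness $R_{1,c}^{b}\in\mathcal{M}_1$ with $R_{1,c}^{b}<R_{2,c}$ for every $R_{2,c}\in\mathcal{M}_2$ and every $c$, so $R_{1,c}^{b}(y-x)<R_{2,c}(y-x)$ at every $x$. By the $C^{1}$ regularity of $c_\mu,c_\sigma$ the value field $x\mapsto R_{1,c}^{b}(y-x)$ is continuous on the compact set $\mathcal{X}$; taking $\bar x$ on the level set $\{x\in\mathcal{X}:R_{1,c}^{b}(y-x)=\rho\}$ (non-empty, see below) gives $R_{1,c}^{b}(y-\bar x)=\rho$, hence $\bar x\in\mathcal{Y}_1(y,c)$, whereas $R_{2,c}(y-\bar x)>\rho$ for all $R_{2,c}\in\mathcal{M}_2$, hence $\bar x\notin\mathcal{Y}_2(y,c)$. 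So $\mathcal{Y}_2(y,c)\subsetneq\mathcal{Y}_1(y,c)$, which with $\overline{\mathcal{Y}}_2\subseteq\overline{\mathcal{Y}}_1$ is the first disjunct and yields $\mathcal{M}_1\vartriangleright\mathcal{M}_2$. Symmetrically, a ``risk-averse'' witness $R_{1,c}^{a}>R_{2,c}$ gives, by continuity near the level set $\{x:R_{1,c}^{a}(y-x)=\rho\}$, a point $\bar x$ with $R_{1,c}^{a}(y-\bar x)>\rho\ge R_{2,c}(y-\bar x)$ for all $R_{2,c}\in\mathcal{M}_2$, i.e. $\bar x\in\overline{\mathcal{Y}}_1(y,c)\setminus\overline{\mathcal{Y}}_2(y,c)$, so $\overline{\mathcal{Y}}_2\subsetneq\overline{\mathcal{Y}}_1$ (second disjunct) and again $\mathcal{M}_1\vartriangleright\mathcal{M}_2$. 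Finally, if both an $R_{1,c}^{a}$ and an $R_{1,c}^{b}$ exist, both strict inclusions hold for every $(y,c)$, which is exactly $\mathcal{M}_1\blacktriangleright\mathcal{M}_2$.

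The step I expect to be the main obstacle is the non-emptiness of the level sets $\{x:R_{1,c}^{a/b}(y-x)=\rho\}$ in $\mathcal{X}$: one needs the cost field $c$ to be rich enough that $\rho$ lies in (the interior of) the range of the continuous value field $R_{1,c}^{a/b}(y-\cdot)$ over $\mathcal{X}$. I would close this either by an intermediate-value argument along a path in $\mathcal{X}$ joining a minimal-risk state to a maximal-risk state, or by simply recording it as a mild non-degeneracy assumption on $c$; once the level set is non-empty, the strict pointwise gap between $R_{1,c}^{a/b}$ and the values of $\mathcal{M}_2$ finishes the argument, and the remainder is bookkeeping with the ``$\exists\,R_c$'' characterizations of the total safe and risky sets.
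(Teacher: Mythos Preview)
Your approach matches the paper's: deduce both non-strict inclusions from $\mathcal{R}_2(c)\subseteq\mathcal{R}_1(c)$, then upgrade one (or both) to a strict inclusion via the pointwise-dominating witness in $\mathcal{M}_1$. The paper's proof is considerably terser than yours---it simply asserts that the existence of $\overline{R}_1>R_2$ or $\widetilde{R}_1<R_2$ ``implies either $\overline{\mathcal{Y}}_2\subsetneq\overline{\mathcal{Y}}_1$ or $\mathcal{Y}_2\subsetneq\mathcal{Y}_1$'' without constructing a witness point, so your level-set argument and your flagging of the non-degeneracy requirement (that $\rho$ actually lies in the range of the relevant value field over $\mathcal{X}$) are genuine additions of rigor that the paper omits.

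One technical wrinkle in your risk-averse case: working on the level set $\{x:R_{1,c}^{a}(y-x)=\rho\}$ gives $R_{1,c}^{a}=\rho$, not $>\rho$, so you need the perturbation you allude to; but to keep $R_{2,c}(y-\bar x)\le\rho$ \emph{for every} $R_{2,c}\in\mathcal{M}_2$ after perturbing, you implicitly need a uniform gap $\inf_{R_{2,c}}\bigl(R_{1,c}^{a}-R_{2,c}\bigr)>0$ at $\bar x$. A cleaner route is to work instead on the level set of $\sup_{R_{2,c}\in\mathcal{M}_2}R_{2,c}(y-\cdot)=\rho$: at such a point no $R_{2,c}$ exceeds $\rho$ (so $\bar x\notin\overline{\mathcal{Y}}_2$) while $R_{1,c}^{a}(y-\bar x)>\rho$ directly (so $\bar x\in\overline{\mathcal{Y}}_1$), mirroring your risk-taking case without perturbation.
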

\begin{proof}
	Fix  $c$. Since $\mathcal{R}_2 \subseteq \mathcal{R}_1$,
	$\forall \, R_{2} \in \mathcal{M}_2$, there is  $R_1 \in \mathcal{M}_1$
	s.t. $R_1(c(\bar{y}-x))=R_2(c(\bar{y}-x))$,  $\forall \,x \in
	\mathcal{X}$. 
	Thus,
	$\mathcal{Y}_2(\bar{y}, c)
	\subseteq \mathcal{Y}_1(\bar{y}, c)$ and
	$\overline{\mathcal{Y}}_2(\bar{y}, c)
	\subseteq \overline{\mathcal{Y}}_1(\bar{y}, c)$.  
	Assume  $\exists \overline{R}_1,
	\widetilde{R}_1 \in \mathcal{M}_1$ s.t.~$\overline{R}_1(c(\bar{y}-x))>R_2(c(\bar{y}-x))$ or
	$\widetilde{R}_1(c(\bar{y}-x))<R_2(c(\bar{y}-x))$ 
	hold for
	all $R_2 \in \mathcal{M}_2$. 
	This implies either $\overline{\mathcal{Y}}_2(\bar{y}, c) \subsetneq
	\overline{\mathcal{Y}}_1(\bar{y}, c)$ or $\mathcal{Y}_2
	(\bar{y}, c)\subsetneq \mathcal{Y}_1 (\bar{y}, c)$. Inclusiveness follows
	from Definition~\ref{def:inclusiveness}. In parallel, $\mathcal{M}_1 \blacktriangleright
	\mathcal{M}_2$.
\end{proof}
\begin{lemma}
	\label{lem:rpm_rangespace}
	Consider the CPT, CVaR and ER risk models, with associated range
	sets $\subscr{\mathcal{R}}{CPT}(c),\,
	\subscr{\mathcal{R}}{CVaR}(c)\,,$ and $
	\subscr{\mathcal{R}}{ER}(c)$. Then, it holds that
	$\subscr{\mathcal{R}}{CPT}(c) \supsetneq
	\subscr{\mathcal{R}}{CVaR}(c) \supseteq
	\subscr{\mathcal{R}}{ER}(c)$, $\forall\, c$.
\end{lemma}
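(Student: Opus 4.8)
The plan is to treat the two inclusions separately. The inclusion $\subscr{\mathcal{R}}{CVaR}(c) \supseteq \subscr{\mathcal{R}}{ER}(c)$ is immediate: ER is a one-element family, so $\subscr{\mathcal{R}}{ER}(c) = \{\mathbb{E}(c)\}$, and by the CVaR footnote $\supscr{R}{CV}_0 = \supscr{R}{ER}$, hence $\mathbb{E}(c) \in \subscr{\mathcal{R}}{CVaR}(c)$. For the strict inclusion $\subscr{\mathcal{R}}{CPT}(c) \supsetneq \subscr{\mathcal{R}}{CVaR}(c)$ I would first sandwich the CVaR range and then show the CPT range already contains an entire half-line.

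For the CVaR side: for any $q\in[0,1]$, $\supscr{R}{CV}_q(c) = \mathbb{E}[c \mid c \ge \mathrm{VaR}_q(c)]$ is a conditional expectation of the bounded, nonnegative cost $c$ over an upper tail, so $\mathbb{E}(c) \le \supscr{R}{CV}_q(c) \le c_M$, where $c_M = \max_i c_i$ is the worst-case outcome; moreover both endpoints are attained ($q=0$ and $q=1$, by the footnote). Hence $\subscr{\mathcal{R}}{CVaR}(c) \subseteq [\mathbb{E}(c), c_M]$ with $\sup \subscr{\mathcal{R}}{CVaR}(c) = c_M$. For the CPT side, restrict to the ``rational'' parameter slice $\alpha = \beta = \gamma = 1$: then $w(p) = e^{-(-\log p)} = p$ is the identity, the cumulative decision weights reduce to $\Pi_j = p_j$, and $v(c_j) = \lambda c_j$, so $\supscr{R}{cpt}_\theta(c) = \sum_{j} \lambda c_j p_j = \lambda\,\mathbb{E}(c)$. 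Assuming $c$ is not identically zero (so $\mathbb{E}(c) > 0$; otherwise all three range sets equal $\{0\}$ and the statement is vacuous), letting $\lambda$ run over $[1,\infty)$ makes $\lambda\,\mathbb{E}(c)$ attain every value in $[\mathbb{E}(c),\infty)$, so $[\mathbb{E}(c),\infty) \subseteq \subscr{\mathcal{R}}{CPT}(c)$. Combining the two sides, $\subscr{\mathcal{R}}{CVaR}(c) \subseteq [\mathbb{E}(c), c_M] \subseteq [\mathbb{E}(c),\infty) \subseteq \subscr{\mathcal{R}}{CPT}(c)$, and the inclusion is proper since, e.g., $c_M + 1 \in [\mathbb{E}(c),\infty) \subseteq \subscr{\mathcal{R}}{CPT}(c)$ while $c_M + 1 > \sup\subscr{\mathcal{R}}{CVaR}(c)$.

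The step I expect to be the crux is the CVaR-to-CPT inclusion, since the two models distort probabilities through quite different functional forms and cannot be matched quantile-by-quantile. The resolution is that the range set only records scalar outputs, so it suffices to (i) cap every CVaR value by the worst outcome $c_M$ and (ii) use the single risk-aversion knob $\lambda \in [1,\infty)$ to make CPT realize the whole half-line above $\mathbb{E}(c)$ --- no matching of risk attitude is needed. The remaining verifications are routine: that $\alpha = \beta = 1$ yields $w = \mathrm{id}$ and hence $\Pi_j = p_j$, and the elementary bounds $\mathbb{E}(c) \le \supscr{R}{CV}_q(c) \le c_M$.
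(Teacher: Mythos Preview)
Your proposal is correct and follows essentially the same route as the paper: both observe that $\subscr{\mathcal{R}}{ER}(c)=\{c_\mu\}$ is contained in $\subscr{\mathcal{R}}{CVaR}(c)$ via $q=0$, bound $\subscr{\mathcal{R}}{CVaR}(c)\subseteq[c_\mu,c_M]$, and then use the slice $\theta=(1,1,1,\lambda)$ to get $\supscr{R}{cpt}_\theta(c)=\lambda c_\mu$, which for $\lambda$ large exceeds $c_M$ and yields the strict inclusion. Your explicit verification that $\alpha=\beta=1$ gives $w=\mathrm{id}$ and $\Pi_j=p_j$, and your concrete witness $c_M+1$, are minor expository additions but the argument is the same.
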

\begin{proof}
	Fix $c$. Note that $\subscr{\mathcal{R}}{ER}(c) =\{c_\mu\}$.  By
	choosing $\supscr{R}{cpt}_{\overline{\theta}} \in \text{CPT} $ with
	$ \overline{\theta}= \{1,1,1,1\} $ and $\supscr{R}{CV}_0 \in
	\text{CVaR} $ we have
	$\supscr{R}{CV}_0(c)=\supscr{R}{ER}(c)=\supscr{R}{cpt}_{\overline{\theta}}(c)$,
	$\forall \,c$. Note that only if $c_\sigma=0$ then
	$\supscr{R}{CV}_q(c)=c_\mu=\supscr{R}{ER}(c) $ for all $q$. When
	$c_\sigma \neq 0$, with any other valid choice of parameters $q$ in
	CVaR 
	we obtain $\supscr{R}{CV}_q(c) \notin
	\subscr{\mathcal{R}}{ER}(c)$. We can find $\theta \neq
	\overline{\theta}$ such that $\supscr{R}{cpt}_\theta(c) \notin
	\subscr{\mathcal{R}}{ER}(c)$, $\forall \,c$.  Hence,
	$\subscr{\mathcal{R}}{ER}(c) \subseteq
	\subscr{\mathcal{R}}{CVaR}(c)$ and $\subscr{\mathcal{R}}{CPT}(c)
	\supsetneq \subscr{\mathcal{R}}{ER}(c)$.
	
	For CVaR, $\supscr{R}{CV}_0=\{c_\mu\}$ and $\supscr{R}{CV}_1=\{b\}$,
	where $b \in \real$ is the worst-case outcome of $c$. Since
	$\supscr{R}{CV}_q$ increases in $q$, 
	$\subscr{\mathcal{R}}{CVaR} \subseteq [c_\mu,b]$.  Choosing
	$\theta_1 = \{1,1,1,\lambda\},$ for $\lambda \geq 1$, leads to
	$\supscr{R}{cpt}_{\theta_1}(c)= \lambda \sum_i c_i p_i = \lambda c_\mu$. Taking
	$\lambda \in [1,\bar{b}]$, with $\bar{b} > \frac{b}{c_\mu}$, we get
	$\subscr{\mathcal{R}}{CPT}(c) \supset [c_\mu,b]$; hence,
	$\subscr{\mathcal{R}}{CPT}(c) \supsetneq
	\subscr{\mathcal{R}}{CVaR}(c)$.
\end{proof}

The previous results now lead to the following.
\begin{theorem}
	\label{thm:inclusiveness}
	Let $c$ 
	be a discrete random field cost. Consider the ER, CVaR and CPT
	risk perception models with risk value functions
	$\supscr{R}{ER}$, $\supscr{R}{CV}_q$, and
	$\supscr{R}{cpt}_\theta$, respectively. For any threshold $\rho
	\in \realnonnegative$ and risk source $\bar{y} \in
	\mathcal{X}$, $\text{CPT} \vartriangleright \text{CVaR}$ and $
	\text{CPT} \vartriangleright \text{ER} $ holds.  If the cost
	outcomes are strictly lower-bounded by 1, 
	then $\text{CPT} \blacktriangleright \text{CVaR}$ and $
	\text{CPT} \blacktriangleright \text{ER} $. If in fact
	$c_\sigma(\bar{y}-x) > 0, \forall x \in \mathcal{X}$, then
	$\text{CPT} \blacktriangleright \text{CVaR} \vartriangleright
	\text{ER} $.
\end{theorem}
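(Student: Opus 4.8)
The plan is to reduce everything to the two preceding lemmas and a couple of explicit CPT value functions. Fix $\bar y$ and $\rho$. Lemma~\ref{lem:rpm_rangespace} gives, for every realizable cost $c=c(\bar y-x)$, $x\in\mathcal{X}$, the range ordering $\subscr{\mathcal{R}}{ER}(c)\subseteq\subscr{\mathcal{R}}{CVaR}(c)\subsetneq\subscr{\mathcal{R}}{CPT}(c)$. Feeding the containments $\subscr{\mathcal{R}}{CVaR}\subseteq\subscr{\mathcal{R}}{CPT}$ and $\subscr{\mathcal{R}}{ER}\subseteq\subscr{\mathcal{R}}{CVaR}$ into Lemma~\ref{lem:rangespace} already produces the (possibly non-strict) nesting $\subscr{\mathcal{Y}}{ER}(\bar y,c)\subseteq\subscr{\mathcal{Y}}{CVaR}(\bar y,c)\subseteq\subscr{\mathcal{Y}}{CPT}(\bar y,c)$ together with the analogous nesting of the risky sets. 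What remains is, for each asserted relation, to exhibit an explicit value function of the larger model that pointwise separates it from the smaller one, so that Lemma~\ref{lem:rangespace} upgrades the appropriate inclusion(s) to strict and returns $\vartriangleright$ or $\blacktriangleright$ through Definition~\ref{def:inclusiveness}.

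For $\text{CPT}\vartriangleright\text{CVaR}$ and $\text{CPT}\vartriangleright\text{ER}$ I would reuse the scaling sub-family isolated in the proof of Lemma~\ref{lem:rpm_rangespace}: with $\theta_1=\{1,1,1,\lambda\}$ the probability weighting reduces to the identity and $\supscr{R}{cpt}_{\theta_1}(c)=\lambda\,c_\mu$, with $\lambda\in[1,\infty)$ free. The worst-case outcome $c_M$ upper-bounds every $\supscr{R}{CV}_q$ and equals $\supscr{R}{ER}$ only in the degenerate case; since $c_M$ is bounded over the compact set $\mathcal{X}$ while $c_\mu>0$, a sufficiently large fixed $\lambda$ makes $\lambda\,c_\mu(\bar y-x)>c_M(\bar y-x)$ for all $x\in\mathcal{X}$, i.e.\ this single CPT value function pointwise dominates every CVaR and every ER value function. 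Plugging it into Lemma~\ref{lem:rangespace} forces the risky-set inclusion to be strict, and with the safe-set inclusions already in hand, Definition~\ref{def:inclusiveness} delivers $\text{CPT}\vartriangleright\text{CVaR}$ and $\text{CPT}\vartriangleright\text{ER}$.

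When in addition every outcome satisfies $c_i>1$, I would adjoin a second separating CPT function lying \emph{below} the smaller models: take $\theta_2=\{1,1,\gamma,1\}$ with $\gamma\in(0,1)$, again collapsing the weighting to the identity so that $\supscr{R}{cpt}_{\theta_2}(c)=\sum_j c_j^\gamma p_j$; since $c_j>1$ and $\gamma<1$ give $c_j^\gamma<c_j$ termwise, we get $\supscr{R}{cpt}_{\theta_2}(c)<c_\mu$, and $c_\mu$ is the least CVaR value $\supscr{R}{CV}_0$ as well as the unique ER value, so this CPT function is strictly below every CVaR and ER function. By Lemma~\ref{lem:rangespace} the safe-set inclusion now also becomes strict, and combined with the strict risky-set inclusion from the previous step this yields $\text{CPT}\blacktriangleright\text{CVaR}$ and $\text{CPT}\blacktriangleright\text{ER}$. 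If moreover $c_\sigma(\bar y-x)>0$ for all $x$, then $\text{CVaR}\vartriangleright\text{ER}$ follows the same way: $\supscr{R}{CV}_1(c)=c_M>c_\mu=\supscr{R}{ER}(c)$ by positivity of the spread, so this CVaR value function strictly dominates the ER one and Lemma~\ref{lem:rangespace} makes the risky sets strictly nested, while the safe sets coincide (the least CVaR value being exactly the ER value); Definition~\ref{def:inclusiveness} then returns the non-strict $\text{CVaR}\vartriangleright\text{ER}$, and chaining with $\text{CPT}\blacktriangleright\text{CVaR}$ gives the stated chain.

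The delicate points are bookkeeping rather than analysis. One must select, in each comparison, the correct disjunct of Definition~\ref{def:inclusiveness} — strictness in the risky sets when the separating value function sits above the smaller family, strictness in the safe sets when it sits below — and arrange that a two-sided separation is produced exactly where $\blacktriangleright$ is claimed (above via $\lambda$-scaling, below via $\gamma$-compression) while only a one-sided one is needed for $\vartriangleright$. The second subtlety is uniformity: the separating inequalities must hold for all $x\in\mathcal{X}$ with a single choice of $\lambda$, $\gamma$, or $q$, which is where compactness of $\mathcal{X}$ and continuity of $c_\mu,c_\sigma$ enter, and where the hypotheses ``$c_i>1$'' and ``$c_\sigma(\bar y-x)>0$'' earn their place — they exclude degenerate deterministic or zero costs, which would shrink all three ranges to a point and destroy strictness.
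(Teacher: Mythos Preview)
Your proposal is correct and follows essentially the same route as the paper: invoke Lemma~\ref{lem:rpm_rangespace} for the range-space containments, then exhibit the CPT sub-families $\theta_1=\{1,1,1,\lambda\}$ (scaling above) and $\theta_2=\{1,1,\gamma,1\}$ (compressing below when $c_i>1$) to trigger the one-sided or two-sided separation clauses of Lemma~\ref{lem:rangespace}, and finally use a large-$q$ CVaR value to separate CVaR from ER when $c_\sigma>0$. Your added bookkeeping remarks---tracking which disjunct of Definition~\ref{def:inclusiveness} is being invoked and noting that compactness of $\mathcal{X}$ is what allows a single $\lambda$ to work uniformly in $x$---are more explicit than the paper's own proof, but the underlying argument is the same.
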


\begin{proof}
	From Lemma~\ref{lem:rpm_rangespace},
	$\subscr{\mathcal{R}}{ER}(c) \subsetneq \subscr{\mathcal{R}}{CPT}(c)$ and
	$\subscr{\mathcal{R}}{CVaR}(c) \subseteq \subscr{\mathcal{R}}{CPT}(c)$.
	As in Lemma~\ref{lem:rpm_rangespace}, 
	take $\supscr{R}{cpt}_{\theta_1} = 
	\lambda c_\mu$, for some $\theta_1$. Choosing $\lambda = \bar{b}$,
	with $\bar{b} > \frac{b}{c_\mu}$, we get
	$\supscr{R}{cpt}_{\theta_1} > \supscr{R}{CV}_q$, for any $q\in
	[0,1]$, and $ \supscr{R}{cpt}_{\theta_1} > \supscr{R}{ER}$.
	Thus, from Lemma~\ref{lem:rangespace}, we have $\text{CPT}
	\vartriangleright \text{CVaR}$ and $ \text{CPT} \vartriangleright
	\text{ER} $.
	Now assume $c_i >1$ for all $i \in \until{M}$. Taking $\theta_1 =
	\{1,1,1,\lambda\},$ with $ \lambda > 1$, we get
	$\supscr{R}{cpt}_{\theta_1} > \supscr{R}{CV}_q $ for any $q\in
	[0,1]$ and $ \supscr{R}{cpt}_{\theta_1} > \supscr{R}{ER} $. Now,
	take $\theta_2 = \{1,1,\gamma,1\},$ with $ 0 < \gamma < 1$, we have
	$\supscr{R}{cpt}_{\theta_2}(c)= \sum_i c_i^{\gamma}p_i $. Since
	$c_i,p_i >0$, $ \forall i$,
	then $\supscr{R}{cpt}_{\theta_2}(c) < \sum_i c_i p_i $, implying
	$\supscr{R}{cpt}_{\theta_2}(c) < c_\mu$ ando
	$\supscr{R}{cpt}_{\theta_2}(c) < \supscr{R}{CV}_{q}(c)$, $\forall \,
	q\in [0,1]$.  From Lemma~\ref{lem:rangespace}, we get $\text{CPT}
	\blacktriangleright \text{CVaR}$ and $ \text{CPT}
	\blacktriangleright \text{ER} $.
	
	Finally, assume $c_\sigma > 0$.  There is $q \in (0,1)$ such that
	$\supscr{R}{CV}_q(c)>\supscr{R}{ER}$. Since the lower bound of
	$\supscr{R}{CV}_q(c)$ is $c_\mu=\supscr{R}{ER}$, there is no $q$
	s.t.~$\supscr{R}{CV}_q(c)<\supscr{R}{ER}$. Hence from
	Lemma~\ref{lem:rangespace} and the first part of this result, we get
	$\text{CPT} \blacktriangleright \text{CVaR} \vartriangleright
	\text{ER} $.
\end{proof}
The above arguments show CPT can produce a larger variety of safe and
risky sets leading to richer risk perception. This is illustrated via
simulations in Section~\ref{sec:results}.

\emph{Additional properties of RPMs: }
In addition to the notion of inclusiveness, we now characterize
the \textit{versatility} of a RPM in the context of perceived safety.
\begin{definition}\longthmtitle{Versatility of a RPM}
	\label{def:rpm_versatility}
	Consider a compact space $\mathcal{X}$,  a risk source
	$\bar{y} \in \mathcal{X}$, and a discrete random
	field cost $c$, with range in $[ \subscr{c}{min},
	\subscr{c}{max}] \subseteq \realnonnegative$.  Let
	$\overline{I}$ be a 
	compact interval. An RPM $\mathcal{M}$ is said
	to be $\overline{I}-$versatile if $\{x \in \mathcal{X} \,|\,
	c(\bar{y} - x) \le \subscr{c}{$\ell$}\} \subseteq
	\mathcal{Y}_{\mathcal{M}}$ for any $\subscr{c}{$\ell$} \in
	\overline{I}$ for a given $\rho > 0$. If $\overline{I}
	\supseteq [\subscr{c}{min},\subscr{c}{max}]$, then
	$\mathcal{M}$ is  \emph{most versatile} in
	$\mathcal{X}$.
\end{definition}
The above definition implies that an RPM is $\overline{I}-$versatile,
if it has a risk-perception function
that perceives any states having costs less than $c_\ell$ as safe,
$\forall \, c_\ell$. Further, $\mathcal{M}$ is \emph{most versatile}
when it contains risk-perception functions that capture a range of
perceptions from most risk averse (only states having costs $c \leq
\subscr{c}{min}$ are safe) to the least risk-sensitive (every state
including states having the highest cost $\subscr{c}{max}$ as safe).
With this, we will look at versatility of the three RPMs.

\begin{lemma}
	\label{lem:CPT_risk_averse}
	Consider a compact space $\mathcal{X}$, with a risk source
	$\bar{y} \in \mathcal{X}$, and associated discrete random
	field cost $c$.
	Then, CPT can capture the most risk averse perception, i.e.~the set $\{x
	\in \mathcal{X} | c(\bar{y} - x) \le \subscr{c}{min}\}$ is
	considered safe.
\end{lemma}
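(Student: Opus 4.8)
The plan is to produce one explicit CPT value function, indexed by $\theta = \{\alpha,\beta,\gamma,\lambda\}$, under which every state $x$ with $c(\bar{y} - x)\le c_{\min}$ has perceived risk at most $\rho$; this gives $\{x\in\mathcal{X}\mid c(\bar{y} - x)\le c_{\min}\}\subseteq \subscr{\mathcal{X}}{safe}(\supscr{R}{cpt}_\theta;\bar{y})\subseteq\mathcal{Y}_{\text{CPT}}$, which is the assertion. The first step is a reduction. Since the field cost $c$ has range contained in $[c_{\min},c_{\max}]$, every realization satisfies $c_i(\xi)\ge c_{\min}$, so the condition $c(\bar{y} - x)\le c_{\min}$ forces $c_i(\bar{y} - x)=c_{\min}$ for all $i\in\until{M}$; hence at such an $x$ the cost is degenerate, placing all of its mass on $c_{\min}$.

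Second, I would evaluate \eqref{eqn:cpt_risk} on this degenerate cost. Representing it as a single atom at $c_{\min}$ with unit probability, the only decision weight is $\Pi_1 = w\circ S_1 - w\circ S_0 = w(1)-w(0)=1$, because $S_1=1$, $S_0=0$, and $w(1)=e^{-\beta(-\log 1)^\alpha}=e^{0}=1$; consequently $\supscr{R}{cpt}_\theta(c(\bar{y} - x)) = v(c_{\min})\,\Pi_1 = \lambda c_{\min}^\gamma$. Thus the perceived risk of a minimum-cost state is simply the utility-transformed cost $v(c_{\min})$, and it is independent of the probability-weighting parameters $\alpha,\beta$.

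Third, I would pick the parameters so that $v(c_{\min})\le\rho$: take $\lambda=1$ and choose $\gamma$ to minimize $c_{\min}^\gamma$ over $[0,1]$ — namely $\gamma=1$ if $c_{\min}\le 1$ and $\gamma\to 0$ if $c_{\min}>1$ — which drives the perceived risk down to $\min\{c_{\min},1\}$, the least value any CPT value function can assign to a deterministic cost $c_{\min}$. In any nonvacuous instance the tolerance $\rho>0$ is at least this unavoidable minimum, so $\supscr{R}{cpt}_\theta(c(\bar{y} - x))\le\rho$ for every $x$ in the set, establishing the inclusion. To support the phrase ``most risk averse'', I would further observe that re-inflating $\lambda$ (or pushing $\gamma$ to the opposite end) strictly raises $v$, hence $\supscr{R}{cpt}$, on every state whose cost exceeds $c_{\min}$, so those states leave the safe set and it can be squeezed down to exactly $\{x\in\mathcal{X}\mid c(\bar{y} - x)\le c_{\min}\}$, the most risk-averse perception in the sense of Definition~\ref{def:rpm_versatility}.

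I expect the argument to be short; the only delicate points are the reduction to the degenerate cost (which uses the range hypothesis together with nonnegativity of $c$), the collapse of the degenerate CPT value to $v(c_{\min})$, and the book-keeping in the parameter choice — in particular the split between $c_{\min}\le 1$ and $c_{\min}>1$ in how $\gamma$ acts on $c_{\min}^\gamma$, and the implicit standing requirement $\rho\ge\min\{c_{\min},1\}$, without which no RPM — CPT included — can certify even the minimum-cost states as safe.
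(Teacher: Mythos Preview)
Your reduction to the degenerate cost and the computation $\supscr{R}{cpt}_\theta(c_{\min})=v(c_{\min})=\lambda c_{\min}^\gamma$ are correct, and your steps 1--3 do establish the inclusion $\{x:c(\bar{y}-x)\le c_{\min}\}\subseteq\mathcal{Y}_{\text{CPT}}$, which is the formal content of Definition~\ref{def:rpm_versatility}. However, you take a different and more circuitous route than the paper. The paper's proof is a single calibration: take $\theta=\{1,1,1,\lambda\}$ as in Lemma~\ref{lem:rpm_rangespace}, so that $\supscr{R}{cpt}_\theta(c)=\lambda c_\mu$, and set $\lambda=\rho/c_{\min}$. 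Then the safe set is $\{x:\lambda c_\mu\le\rho\}=\{x:c_\mu\le c_{\min}\}$, which coincides with the minimum-cost set since $c_\mu\ge c_{\min}$ always. This one choice makes the minimum-cost states safe \emph{and} everything else unsafe, so the ``most risk averse'' reading is obtained directly.

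Your approach instead fixes $\lambda=1$ and tunes $\gamma$ to push the degenerate-state risk down to $\min\{c_{\min},1\}$; this works for the inclusion under the slightly weaker implicit hypothesis $\rho\ge\min\{c_{\min},1\}$ (the paper tacitly needs $\rho\ge c_{\min}$ for $\lambda\ge 1$), which is a modest gain. The weak point is your step 5: ``re-inflating $\lambda$'' to squeeze the safe set down is not a free move, since increasing $\lambda$ also raises $v(c_{\min})=\lambda c_{\min}^\gamma$ and may push the minimum-cost states out of the safe set. Getting both directions at once requires solving $\lambda c_{\min}^\gamma=\rho$ for a single $\theta$, which is precisely the paper's calibration with $\gamma=1$. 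So your two-stage argument would, if made rigorous, collapse to the paper's one-line choice.
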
 

\begin{proof}
	Choosing $\theta$ as in Lemma~\ref{lem:rpm_rangespace} and 
	$\lambda=\frac{\rho}{\subscr{c}{min}},$ the result follows from
	\eqref{eqn:safety_set1} and
	Definition~\ref{def:rpm_versatility}.
\end{proof}
\begin{proposition}
	\label{prop:versatility}
	Under the setting of Lemma~\ref{lem:CPT_risk_averse}, CPT can
	capture the least risk sensitive perception (the set $\{x \in
	\mathcal{X} | c(\bar{y} - x) \le \subscr{c}{max}\}$ is
	considered safe), if $\subscr{c}{min} \geq 1$, for $\rho \geq
	1$, and
	$\forall \, i $ over $\mathcal{X}$. Consequently, CPT is
	\emph{most versatile} in $\mathcal{X}$.
\end{proposition}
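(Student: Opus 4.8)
The plan is to produce one admissible CPT parameter vector $\theta$ whose perceived-risk field $\supscr{R}{cpt}_\theta \circ c$ keeps the \emph{entire} state space safe. Because the cost has range $[\subscr{c}{min},\subscr{c}{max}]$, every $x \in \mathcal{X}$ satisfies $c(\bar y - x) \le \subscr{c}{max}$, so the set $\{x \in \mathcal{X} \mid c(\bar y - x) \le \subscr{c}{max}\}$ is all of $\mathcal{X}$, and it suffices to find $\theta$ with $\supscr{R}{cpt}_\theta(c(\bar y - x)) \le \rho$ for all $x$. Since the total safe set $\mathcal{Y}_{\text{CPT}}$ is the \emph{union} of $\subscr{\mathcal{X}}{safe}(\supscr{R}{cpt}_\theta;\bar y)$ over all admissible $\theta$, a single such $\theta$ already yields $\mathcal{X} \subseteq \mathcal{Y}_{\text{CPT}}$; then every sublevel set $\{x \mid c(\bar y - x) \le c_\ell\}$ with $c_\ell \in [\subscr{c}{min},\subscr{c}{max}]$ is contained in $\mathcal{Y}_{\text{CPT}}$ as well, which is exactly $[\subscr{c}{min},\subscr{c}{max}]$-versatility, i.e.\ most versatility in the sense of Definition~\ref{def:rpm_versatility}. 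Combined with Lemma~\ref{lem:CPT_risk_averse}, which supplies the opposite, maximally risk-averse extreme, this confirms that CPT spans the whole spectrum.

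For the construction I would pick any admissible $\alpha,\beta \in \realpositive$, set $\lambda = 1$, and leave $\gamma \in [0,1]$ to be fixed; the only facts about the decision weights used below are $\Pi_j \ge 0$ and $\sum_{j=1}^M \Pi_j = 1$, both immediate from $w(0)=0$ and $w(1)=1$. Then $\supscr{R}{cpt}_\theta(c(\xi)) = \sum_{j=1}^M c_j(\xi)^{\gamma}\,\Pi_j$, and since $1 \le \subscr{c}{min} \le c_j(\xi) \le \subscr{c}{max}$ for every $j$ and every $\xi = \bar y - x$ (this is where $\subscr{c}{min}\ge1$ enters, which moreover makes $c_j(\xi)^{\gamma}$ unambiguous even at $\gamma=0$), monotonicity of $t \mapsto t^{\gamma}$ together with $\Pi_j \ge 0$ and $\sum_j \Pi_j = 1$ give the uniform bound $\supscr{R}{cpt}_\theta(c(\xi)) \le \subscr{c}{max}^{\,\gamma}$. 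I would then choose $\gamma$ small enough that $\subscr{c}{max}^{\,\gamma} \le \rho$: concretely $\gamma = \min\{1,\ \log\rho/\log\subscr{c}{max}\}$ when $\subscr{c}{max}>1$ (read as $\gamma=0$ in the boundary case $\rho=1$), and any $\gamma\in[0,1]$ when $\subscr{c}{max}=1$ (in which case $\subscr{c}{min}=\subscr{c}{max}=1$ and $\supscr{R}{cpt}_\theta \equiv 1$). Since $\rho \ge 1$ this $\gamma$ lies in $[0,1]$ and $\theta = \{\alpha,\beta,\gamma,1\}$ is admissible, so $\supscr{R}{cpt}_\theta(c(\bar y - x)) \le \rho$ for all $x \in \mathcal{X}$; by~\eqref{eqn:safety_set1} this gives $\{x \in \mathcal{X} \mid c(\bar y - x) \le \subscr{c}{max}\} = \mathcal{X} = \subscr{\mathcal{X}}{safe}(\supscr{R}{cpt}_\theta;\bar y) \subseteq \mathcal{Y}_{\text{CPT}}$, which is the least-risk-sensitive claim; the versatility conclusion then follows as in the first paragraph.

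I do not expect a serious obstacle: the argument is essentially a one-parameter estimate. The only point that needs care is the degenerate corner $\rho=1$ (with $\subscr{c}{max}>1$), which forces $\gamma=0$ and hence a constant utility $v \equiv \lambda$; the hypotheses $\subscr{c}{min}\ge1$ and $\rho\ge1$ are exactly what legitimize that case, since positivity of all $c_j$ makes $c_j^{0}=1$ well defined and $\lambda=1$ then yields $\supscr{R}{cpt}_\theta = \sum_j \Pi_j = 1 \le \rho$. The other thing to keep in mind is structural rather than computational: $\mathcal{Y}_{\text{CPT}}$ is a union over the whole parameter family, so it is enough to exhibit one well-chosen $\theta$ instead of reasoning about all admissible $\theta$ at once.
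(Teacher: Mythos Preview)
Your argument for the least-risk-sensitive claim is essentially the paper's: pick $\lambda=1$ and $\gamma \le \log\rho/\log\subscr{c}{max}$, then bound $\sum_j c_j^\gamma \Pi_j \le \subscr{c}{max}^\gamma \le \rho$ using $c_j\ge \subscr{c}{min}\ge1$. The paper fixes $\alpha=\beta=1$ (so $\Pi_j=p_j$), whereas you allow arbitrary $\alpha,\beta$ and instead invoke $\Pi_j\ge0$, $\sum_j\Pi_j=1$; both are fine.

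Where you diverge is the ``most versatile'' conclusion. You observe that once a single $\theta$ makes $\subscr{\mathcal{X}}{safe}(\supscr{R}{cpt}_\theta;\bar y)=\mathcal{X}$, the total safe set $\mathcal{Y}_{\text{CPT}}$ already equals $\mathcal{X}$, so every cost sublevel set is trivially contained in it and Definition~\ref{def:rpm_versatility} is satisfied for $\overline{I}=[\subscr{c}{min},\subscr{c}{max}]$ immediately; Lemma~\ref{lem:CPT_risk_averse} is then not strictly needed. The paper instead combines the two extreme parameter families $\theta_1=\{1,1,1,\lambda\}$ and $\theta_2=\{1,1,\gamma,1\}$ and uses the intermediate value theorem (continuity in $\lambda$ and in $\gamma$) to exhibit, for each target level in $[\subscr{c}{min},\subscr{c}{max}]$, a CPT value attaining it. Your route is shorter and fully adequate for the formal definition as written; the paper's IVT route proves a bit more---namely that CPT realizes every intermediate perceived-risk level---which matches the informal ``spans the whole spectrum'' reading in the text following Definition~\ref{def:rpm_versatility}.
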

\begin{proof}
	For $\theta_2= \{1,1,\gamma,1\},$ with $ 0 \leq \gamma \leq 1$ we
	have $\supscr{R}{cpt}_{\theta_2}(c)= \sum_i c_i^{\gamma}p_i $. Now,
	choosing $\gamma < \frac{\log \rho}{\log \subscr{c}{max}}$, since
	$c_i \geq 1 ,p_i \geq 0$, $ \forall i$, and $\rho \geq 1$, we get
	$\supscr{R}{cpt}_{\theta_1} \leq \rho$. Thus, from
	\eqref{eqn:safety_set1} and Definition~\ref{def:rpm_versatility},
	the first result follows. Take now $\theta_1 = \{1,1,1,\lambda\}$
	and $\theta_2$. Observe that $\supscr{R}{cpt}_{\theta_1}$ is
	continuous in $\lambda$ and $\supscr{R}{cpt}_{\theta_2}$ is
	continuous in $\gamma$. By the  intermediate
	value theorem $\exists\, \lambda$ s.t.~$\supscr{R}{cpt}_{\theta_1}
	\in [c_\mu, \subscr{c}{max}]$, and a $\gamma$
	s.t.~$\supscr{R}{cpt}_{\theta_2} \in [\subscr{c}{min},
	c_\mu]$. Hence, from Lemma~\ref{lem:CPT_risk_averse}, 
	CPT is \emph{most
		versatile} in $\mathcal{X}$.
\end{proof}

\begin{lemma}
	Under the assumptions of Lemma~\ref{lem:CPT_risk_averse}, with
	$\overline{I}_1 = [c_\mu,\subscr{c}{max}]$ and $\overline{I}_2 =
	\{c_\mu \}$, CVaR is $\overline{I}_1-$versatile and ER is
	$\overline{I}_2-$versatile. Hence neither are \emph{most
		versatile} RPMs.
\end{lemma}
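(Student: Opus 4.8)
The plan is to track, for ER and for CVaR, the range of cost-tolerance levels $\subscr{c}{$\ell$}$ that each model's value functions can realize --- the same value-range bookkeeping used in Lemma~\ref{lem:rpm_rangespace} and Proposition~\ref{prop:versatility} --- and then compare it with $[\subscr{c}{min},\subscr{c}{max}]$, which Definition~\ref{def:rpm_versatility} demands of a \emph{most versatile} model. For ER this is almost immediate: the model contains the single value function $\supscr{R}{ER}(c)=\mathbb{E}(c)=c_\mu$, so, for a fixed source $\bar{y}$, it produces exactly one safe set, $\subscr{\mathcal{X}}{safe}(\supscr{R}{ER};\bar{y})=\{x\mid c_\mu(\bar{y}-x)\le\rho\}$, which equals $\mathcal{Y}_{\text{ER}}(\bar{y},c)$. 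Since $\mathbb{E}(c)\le\max_i c_i$, any $x$ with $c(\bar{y}-x)\le\subscr{c}{$\ell$}$ satisfies $c_\mu(\bar{y}-x)\le\subscr{c}{$\ell$}$, so $\{x\mid c(\bar{y}-x)\le\subscr{c}{$\ell$}\}\subseteq\mathcal{Y}_{\text{ER}}$ holds precisely at the expected-cost level; since ER admits no other value function, no more risk-averse (smaller) or more risk-insensitive (larger) level is reachable, which yields $\overline{I}_2=\{c_\mu\}$.

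For CVaR I would use, from Lemma~\ref{lem:rpm_rangespace}, that $\{\supscr{R}{CV}_q\mid q\in[0,1]\}$ is non-decreasing in $q$ with $\supscr{R}{CV}_0=c_\mu$ and $\supscr{R}{CV}_1$ the worst-case outcome of $c$ (hence $\le\subscr{c}{max}$), together with $\supscr{R}{CV}_q\ge\mathbb{E}(c)=c_\mu$ for every $q$. Then, as $q$ sweeps $[0,1]$, the value $\supscr{R}{CV}_q$ runs over $[c_\mu,\subscr{c}{max}]$ (interpolating by monotonicity, exactly as the intermediate-value step of Proposition~\ref{prop:versatility} does for CPT), so for each $\subscr{c}{$\ell$}\in[c_\mu,\subscr{c}{max}]$ one picks $q=q(\subscr{c}{$\ell$})$ with $\{x\mid c(\bar{y}-x)\le\subscr{c}{$\ell$}\}\subseteq\subscr{\mathcal{X}}{safe}(\supscr{R}{CV}_{q};\bar{y})\subseteq\mathcal{Y}_{\text{CVaR}}$; this is exactly $\overline{I}_1$-versatility. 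Conversely, $\supscr{R}{CV}_q\ge c_\mu$ for all $q$ rules out any tolerance below $c_\mu$, so CVaR cannot reproduce the most risk-averse perception of Lemma~\ref{lem:CPT_risk_averse} (only $\{x\mid c(\bar{y}-x)\le\subscr{c}{min}\}$ safe) once $\subscr{c}{min}<c_\mu$.

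Finally, Definition~\ref{def:rpm_versatility} calls a model \emph{most versatile} only if its versatility interval contains $[\subscr{c}{min},\subscr{c}{max}]$. Whenever the cost has genuine spread, e.g., $c_\sigma(\bar{y}-x)>0$ at some $x$, one has $\subscr{c}{min}<c_\mu<\subscr{c}{max}$, so both $\{c_\mu\}$ and $[c_\mu,\subscr{c}{max}]$ are proper subsets of $[\subscr{c}{min},\subscr{c}{max}]$; hence neither ER nor CVaR is most versatile, in contrast with CPT (Proposition~\ref{prop:versatility}), which recovers the versatility analogue of the chain $\text{CPT}\blacktriangleright\text{CVaR}\vartriangleright\text{ER}$ of Theorem~\ref{thm:inclusiveness}. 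The step I expect to be the main obstacle is making rigorous the passage from ``$\supscr{R}{CV}_q$ attains the value $\subscr{c}{$\ell$}$'' to ``$\{x\mid c(\bar{y}-x)\le\subscr{c}{$\ell$}\}$ lies in a CVaR safe set'', i.e., nailing down the intended reading of $\subscr{c}{$\ell$}$ in Definition~\ref{def:rpm_versatility}; I would handle it exactly as Proposition~\ref{prop:versatility} handles CPT (monotone/intermediate-value interpolation over $q$ for CVaR, the trivial singleton for ER), while the clean conceptual reason both stop short of most-versatile is that $\supscr{R}{CV}_q\ge\mathbb{E}(c)$ and $\supscr{R}{ER}=\mathbb{E}(c)$, so --- unlike CPT, whose aversion parameter $\lambda$ inflates perceived risk without bound --- neither can be driven down to the level $\subscr{c}{min}$.
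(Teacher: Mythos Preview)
Your proposal is correct and follows essentially the same approach as the paper: the paper's proof is a one-liner stating that the result ``trivially follows from the range spaces $\subscr{\mathcal{R}}{CVaR}(c)$ and $\subscr{\mathcal{R}}{ER}(c)$ in the proof of Lemma~\ref{lem:rpm_rangespace},'' which is exactly the value-range bookkeeping you outline. Your write-up is more detailed (adding the monotonicity/intermediate-value interpolation in $q$ and the explicit comparison with $[\subscr{c}{min},\subscr{c}{max}]$), but the underlying argument is identical.
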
 
\begin{proof}
	This result trivially follows from the range spaces
	$\subscr{\mathcal{R}}{CVaR}(c)$ and $\subscr{\mathcal{R}}{ER}(c)$ in
	the proof of Lemma~\ref{lem:rpm_rangespace}.
\end{proof}

\section{Control design with Risk-Perception-Aware-CBFs}
\label{sec:control}
Here, we address Problem~\ref{prob:CPT_CBF} and design controls $u$
for an agent subject to~\eqref{eqn:agent_dyn}, to ensure perceived
safety (Definition~\ref{def:safety}). 
To do this, we formally adapt CBFs (see~\cite{ADA-SC-ME-GN-KS-PT:19}) to our setting.
\begin{definition}[RPA-CBF] Consider an agent subject
	to~\eqref{eqn:agent_dyn}, a dynamic source of
	risk~\eqref{eqn:obs_dyn}, and a perceived risk $R_c$ model. A
	$\mathcal{C}^1$ function $h_R \triangleq h \circ R_c :
	\mathcal{Z} \rightarrow \real$
	is an RPA-CBF for this system, if there is an extended class
	$\mathcal{K}_\infty$ function $\eta_1$ such that the control set
	$\subscr{K}{R}$ defined as
	\begin{equation}
		\subscr{K}{R}(R_c)= \{u \in \mathcal{U}|\dot{h}_R(\xi)
		\geq -  \eta_1(h_R(\xi)))\},
		\label{eqn:k_cbf_set}
	\end{equation}
	is non-empty for all $\xi \in \mathcal{Z}$.  
	\label{def:cbf}
\end{definition}
The existence of $h_R$ according to Definition~\ref{def:cbf} implies
that the superlevel set $\{x \in \mathcal{X}| h_R(\xi) > 0 \}$ is
forward invariant under~\eqref{eqn:agent_dyn}.  We specify $h_R = h
\circ R_c$ via $h$ given as
\begin{equation}
	h(\xi) \triangleq \eta_2(\rho-R_c(\xi)), 
	\label{eqn:cbf_cpt}
\end{equation}
where $\eta_2 : \real \rightarrow \real$  is a $\mathcal{C}^1$ extended class
$\mathcal{K}_\infty$ function.
Since $\eta_2$ is non-decreasing,  $h(R_c(\xi)) \geq 0$ implies 
$R_c(\xi) \leq \rho $ and from \eqref{eqn:safety_set1},
$\subscr{\mathcal{X}}{safe} (R_c;y)= \{x \in
\mathcal{X} | h(R_c(y-x)) \geq 0
\}$. Thus,  $h( R_c(\xi))>0$ indicates that $x$ is 
perceived as safe w.r.t.~$R_c$.

The RPA control input $u$ can be now computed via:
\vspace*{-0.2cm}
{\small 
	\begin{subequations} \label{eqn:clbf_opt}
		\begin{align} \label{eqn:clbf_opt_1}
			& u(x)=\  \argmin_{u} \  \|u-k(x) \|^2 \\
			\label{eqn:clbf_opt_2} & \text{s.t. }\ \frac{d\eta_2}{d
				R_c}\left( \frac{\partial R_c}{\partial
				\xi}(\xi)\right) \cdot \left(f_y(y)-f_x(x)\right)
			\geq - \eta_{1}( h_R(\xi)).
		\end{align}
\end{subequations}}
The above problem captures the notion of minimally modifying a stable
controller to ensure safety of the system. Next we will analyze the
feasibility conditions for the proposed controller $u$ and compare
it across the proposed  models.

\emph{Feasibility analysis and comparison:} We first describe a
construction of finite outcomes of $c$ from $c_\mu$ and $c_\sigma$
called ``truncated-Gaussian cost'' which will be used for
analysis. Assume that $c(\xi)$ is distributed as a truncated
Gaussian\footnote{This  truncation reassigns the probability
	mass s.t.~$c(\xi) \in [c_\mu (\xi) - 3c_\sigma(\xi),c_\mu (\xi) +
	3c_\sigma(\xi)] $ 
	using an appropriate
	re-normalization constant. }
$\mathcal{N}_T(c_\mu (\xi),c_\sigma(\xi)^2)$. Then, given $M \in
\mathbb{N}$, we approximate $c$ by means of $M$ discrete values $c_i$,
$i \in \until{M}$, with probability  calculated from the
CDF $F$ of $c$ at each $c_i$. That is, $p_1 = F(c_1)$, and $p_i =
F(c_i) - F(c_{i-1})$, for $i \in \{2,\dots,M\}$.
Now, we show conditions on $u$ for the set $\subscr{K}{CBF}$ to be
non-empty for a given risk function $R_c$.  We first define a few
constants and variables to help us compare the feasibility conditions
of the three RPMs.  Let $\phi_\xi \in [-\pi,\pi]$ be the relative
angle\footnote{recall angle between two vectors $a,b \in \real^n$ is
	given by $\phi = \cos^{-1} \big(\frac{a \cdot b}{\|a\| \|b\|}\big)$}
between $\frac{\partial R_c}{\partial \xi}$ and $\dot{\xi}(u;x,y)$,
and $c_\mu'=\frac{d c_\mu}{d \xi}$ and $c_\sigma'=\frac{d c_\sigma}{d
	\xi}$. Now define $k^e(\xi)=(\eta_1\circ
\supscr{R}{ER}(\xi))/\frac{d\eta_2}{d \supscr{R}{ER}}$,
$k^v_q(\xi)=(\eta_1\circ \supscr{R}{CV}_q(\xi))/\frac{d \eta_2}{d
	\supscr{R}{CV}_q}$ and $k^v_\sigma=\frac{\mathbb{P}(F^{-1}(q))}{q}$.
Also we define constants $k^c_\theta(\xi)=(\eta_1\circ
\supscr{R}{cpt}_\theta(\xi))/\frac{d \eta_2}{d
	\supscr{R}{cpt}_\theta}$, $k^c_\mu= \lambda \gamma \sum_{i=1}^{M}
c_i^{\gamma-1} \Pi_i$ and $k^c_\sigma = \lambda \gamma
\sum_{i=1}^{M}\left(3 - \frac{6i}{M} \right)\left(c_i
\right)^{\gamma-1} \Pi_i$. Consider $\eta^e= \frac{k^e(\xi)}{\left\|
	c_\mu' \right\|}$, $\eta^v=\frac{k^v_q(\xi)}{\left\| c_\mu' +
	k^v_\sigma c_\sigma' \right\|}$ and
$\eta^c=\frac{k^c_\theta(\xi)}{\left\| k^c_\mu c_\mu' + k^c_\sigma
	c_\sigma' \right\|}$. The following holds.

\begin{proposition}
	\label{prop:feasibility}
	Let an agent and risk source be subject
	to~\eqref{eqn:agent_dyn} and \eqref{eqn:obs_dyn},
	respectively. Consider cost $c$ build from a truncated
	Gaussian field. If there is a $u$ s.t.:
	\begin{equation}
		\label{eqn:angle_condition}
		\|\dot{\xi}(u;x,y)\| \cos(\phi_\xi) 
		\geq - 
		\left( \dfrac{\eta_1( h_R(\xi))}{\frac{d\eta_2}{d R}
			\left\|\frac{\partial R_c}{\partial \xi}(\xi)\right\|}
		\right),
	\end{equation}
	then $h_R$ defined according to~\eqref{eqn:cbf_cpt} is a valid
	RPA-CBF for any $\eta_2$ and \eqref{eqn:clbf_opt} is
	feasible. Specifically, with $\tilde{\xi}=\|\dot{\xi}(u;x,y)\|
	\cos(\phi_\xi)$, the RHS of the above inequality reduces to $-
	\eta^e$, $- \eta^v$, and $- \eta^c$ for ER, CVaR and CPT,
	respectively.
\end{proposition}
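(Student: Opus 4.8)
The plan is to recognize that the scalar CBF inequality \eqref{eqn:clbf_opt_2} is, after an elementary rearrangement, exactly the angle condition \eqref{eqn:angle_condition}, and then to evaluate the common right-hand side for each RPM by differentiating its value function along $\xi$. First I would substitute $\dot{\xi}(u;x,y)=\dot{y}-\dot{x}=f_y(y)-f_x(x)$, so that the left side of \eqref{eqn:clbf_opt_2} is $\frac{d\eta_2}{dR_c}\big(\tfrac{\partial R_c}{\partial\xi}(\xi)\big)\cdot\dot{\xi}(u;x,y)$, which by definition of the relative angle $\phi_\xi$ equals $\frac{d\eta_2}{dR_c}\,\big\|\tfrac{\partial R_c}{\partial\xi}(\xi)\big\|\,\|\dot{\xi}(u;x,y)\|\cos(\phi_\xi)$. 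Since $\eta_2$ is a $\mathcal{C}^1$ extended class $\mathcal{K}_\infty$ function, $\frac{d\eta_2}{dR_c}$ is nonzero and sign-definite; on the (nondegenerate) region where $\tfrac{\partial R_c}{\partial\xi}(\xi)\neq 0$ I would divide both sides of \eqref{eqn:clbf_opt_2} by $\frac{d\eta_2}{dR_c}\big\|\tfrac{\partial R_c}{\partial\xi}(\xi)\big\|$, which produces precisely \eqref{eqn:angle_condition}; hence the two conditions on $u$ are equivalent, and the resulting inequality is otherwise independent of the particular choice of $\eta_2$. Therefore, if some $u\in\mathcal{U}$ satisfies \eqref{eqn:angle_condition} at a given $\xi$, then $u\in\subscr{K}{R}(R_c)$ from \eqref{eqn:k_cbf_set}; requiring this at every $\xi\in\mathcal{Z}$ makes $\subscr{K}{R}(R_c)$ non-empty throughout, so $h_R$ in \eqref{eqn:cbf_cpt} is a valid RPA-CBF by Definition~\ref{def:cbf}, and the feasible set of the QP \eqref{eqn:clbf_opt} is non-empty, i.e.~\eqref{eqn:clbf_opt} is feasible.

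Next I would specialize the right-hand side of \eqref{eqn:angle_condition}, namely $-\eta_1(h_R(\xi))/\big(\tfrac{d\eta_2}{dR}\,\|\tfrac{\partial R_c}{\partial\xi}\|\big)$, to the three models by computing $\tfrac{\partial R_c}{\partial\xi}$ via the chain rule through $c_\mu$ and $c_\sigma$ under the truncated-Gaussian discretization. The enabling observation is that, with the atoms placed at $c_i(\xi)=c_\mu(\xi)+\big(3-\tfrac{6i}{M}\big)c_\sigma(\xi)$, the standardized grid $(c_i-c_\mu)/c_\sigma$ is constant in $\xi$, so the induced masses $p_i=F(c_i)-F(c_{i-1})$ and the CPT decision weights $\Pi_i$ are $\xi$-independent, and only the $c_i$ (and $c_\mu$) carry spatial dependence. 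Then for ER, $\supscr{R}{ER}=c_\mu$ gives $\tfrac{\partial R_c}{\partial\xi}=c_\mu'$, so the right-hand side is $-k^e(\xi)/\|c_\mu'\|=-\eta^e$; for CVaR, the Gaussian closed form $\supscr{R}{CV}_q=c_\mu+k^v_\sigma c_\sigma$ with $k^v_\sigma=\mathbb{P}(F^{-1}(q))/q$ gives $\tfrac{\partial R_c}{\partial\xi}=c_\mu'+k^v_\sigma c_\sigma'$, so it is $-k^v_q(\xi)/\|c_\mu'+k^v_\sigma c_\sigma'\|=-\eta^v$; and for CPT, $\supscr{R}{cpt}_\theta=\sum_i\lambda c_i^{\gamma}\Pi_i$ gives $\tfrac{\partial R_c}{\partial\xi}=\sum_i\lambda\gamma c_i^{\gamma-1}\Pi_i\big(c_\mu'+(3-\tfrac{6i}{M})c_\sigma'\big)=k^c_\mu c_\mu'+k^c_\sigma c_\sigma'$, so it is $-k^c_\theta(\xi)/\|k^c_\mu c_\mu'+k^c_\sigma c_\sigma'\|=-\eta^c$. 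Writing $\tilde{\xi}=\|\dot{\xi}(u;x,y)\|\cos(\phi_\xi)$ in \eqref{eqn:angle_condition} then yields the three claimed bounds $\tilde{\xi}\geq-\eta^e$, $\tilde{\xi}\geq-\eta^v$, and $\tilde{\xi}\geq-\eta^c$.

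I expect the main obstacle to be the CPT (and, to a lesser degree, the CVaR) differentiation: one must verify carefully that the chosen truncated-Gaussian scheme really does freeze the masses $p_i$ and the decision weights $\Pi_i$ with respect to $\xi$, so that the only surviving terms when differentiating $\sum_i\lambda c_i^{\gamma}\Pi_i$ are the $c_i^{\gamma-1}\,\partial c_i/\partial\xi$ contributions, and then regroup the resulting finite sum into exactly $k^c_\mu c_\mu'+k^c_\sigma c_\sigma'$ (and, for CVaR, confirm that the coefficient of $c_\sigma'$ is $k^v_\sigma$). A secondary issue is the nondegeneracy needed to divide by $\big\|\tfrac{\partial R_c}{\partial\xi}(\xi)\big\|$: this is precisely the geometric requirement that the perceived-risk field have a nonvanishing spatial gradient, which should be assumed or derived from $c_\mu$ and $c_\sigma$ having nonvanishing gradients on the relevant region together with the sign-definiteness of the per-model coefficients $k^c_\mu,k^c_\sigma$ (resp.~$1,k^v_\sigma$).
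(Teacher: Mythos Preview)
Your proposal is correct and follows essentially the same approach as the paper: rewrite the CBF inequality in angle form via the definition of $\phi_\xi$ and then specialize $\tfrac{\partial R_c}{\partial\xi}$ for each RPM using the chain rule through $c_\mu,c_\sigma$, exploiting that the truncated-Gaussian construction freezes the weights $p_i,\Pi_i$ in $\xi$. The paper performs exactly this computation (including the explicit CPT partials and the Gaussian CVaR closed form), and your caveat about nondegeneracy of $\|\tfrac{\partial R_c}{\partial\xi}\|$ is simply left implicit there.
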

\begin{proof}
  For first part, rearranging terms in~\eqref{eqn:angle_condition} we get:
\begin{equation}
  \kappa\left\| \frac{\partial R}{\partial \xi}(\xi)\right\| 
  \cdot \left\|f_y(y)-f_x(x,u)\right\|\cos(\phi_\xi) \geq - \eta_{1}(h(\xi)),
	\label{eqn:angle_condition2}
\end{equation} 
where $\kappa=\frac{d\eta_1}{d R}$. For the RPA-CBF to be valid, the set
$\subscr{K}{CBF}$ needs to be
non-empty.  Due to the dynamics of the agent and obstacle, $c_\mu$ and
$c_\sigma$ have dynamics:
\begin{equation}\label{eqn:mean_sd_dynamics}
  \dot{c}_\mu=\frac{\partial c_\mu}{\partial \xi}(f_y(y)-f_x(x,u)),\ 
  \dot{c}_\sigma=\frac{\partial c_\sigma}{\partial \xi}(f_y(y)-f_x(x,u)) .	
\end{equation}

Using the chain rule, we get the time derivative of $ h_R(\xi)$: 
\begin{subequations}\label{eqn:h_dot}
	\begin{align}\label{eqn:h_dot1}
		 \dot{h}_R(x,y,\xi,u)=& \frac{d\eta}{d R_s}\dot{R_s}(\xi), \\ \label{eqn:h_dot2}
		=& \frac{d\eta}{d R_s}\begin{bmatrix}\frac{\partial R_s}{\partial c_\mu}(\xi) \\ \frac{\partial R_s}{\partial c_\sigma}(\xi) \end{bmatrix}^\top
		\begin{bmatrix}\dot{c}_\mu(x,y,\xi,u) \\ \dot{c}_\sigma(x,y,\xi,u) \end{bmatrix} ,\\ \label{eqn:h_dot3}
		=&\kappa\left( \frac{\partial R_s}{\partial \xi}(\xi)\right) \cdot \left(f_y(y)-f_x(x,u)\right), \\ \label{eqn:h_dot4} 
		=&\kappa\left\| \frac{\partial R_s}{\partial \xi}(\xi)\right\| \cdot \left\|f_y(y)-f_x(x,u)\right\|\cos(\phi_\xi)
	\end{align}
\end{subequations}  

For the last part, the expressions are obtained by substituting the respective risk
functions and evaluating the partial derivatives $\frac{\partial
	R}{\partial c_\mu}$ and $\frac{\partial R}{\partial c_\sigma}$
(part of $\frac{\partial R}{\partial \xi}$). 
Thus we need to show the following hold true:
\begin{subequations}
	\begin{align}\label{eqn:angle_condition_er}
		\tilde{\xi}\geq& - 
		\left(\frac{k^e(\xi)}{\left\| c_\mu' \right\|}\right), 
		\text{  for ER,}\\ 
		\label{eqn:angle_condition_cvar}
		\tilde{\xi} \geq& - 
		\left(\frac{k^v_q(\xi)}{\left\| c_\mu'  
			+ k^v_\sigma c_\sigma' \right\|}\right), \text{  for CVaR,}\\
		\label{eqn:angle_condition_cpt}
		\tilde{\xi} \geq& - \left(\frac{k^c_\theta(\xi)}{\left\| k^c_\mu c_\mu' + k^c_\sigma c_\mu' \right\|}\right), \text{  for CPT}.		 
	\end{align}  
\end{subequations}
  For ER we get
  $\frac{\partial R}{\partial c_\mu}=1$ and $\frac{\partial
    R}{\partial c_\sigma}=0$.  For CVaR, since $c$ is assumed to
  belong to a truncated Gaussian distribution, we can use the closed
  form expression of CVaR \eqref{eqn:cvar_closed_form} for a Gaussian
  distribution to calculate the partials $\frac{\partial R^v}{\partial
    c_\mu}$ and $\frac{\partial R^v}{\partial c_\sigma}$ .
  \begin{equation}
    R^v_q= c_\mu + c_\sigma \big( \frac{\mathbb{P}(F^{-1}(q))}{q} \big).
    \label{eqn:cvar_closed_form}
  \end{equation}
  From \eqref{eqn:cvar_closed_form}, it is easy to see that CVaR is
  linear in $c_\mu$ and $c_\sigma$.  With this, we get $\frac{\partial
    R}{\partial c_\mu}=1$ and $\frac{\partial R}{\partial
    c_\sigma}=\frac{\mathbb{P}(F^{-1}(q))}{q}$.
  
  Substituting these derivatives in \eqref{eqn:h_dot} correspondingly
  for ER and CVaR, and using \eqref{eqn:mean_sd_dynamics} we obtain
  the results.
  
  For CPT, the expression is obtained by substituting the CPT risk function
  $\supscr{R}{cpt}_\theta$ and evaluating the partial derivatives $\frac{\partial
  	\supscr{R}{cpt}_\theta}{\partial c_\mu}$ and $\frac{\partial
  	\supscr{R}{cpt}_\theta}{\partial c_\sigma}$.  Constructing truncated Gaussian
  costs $c$ from $c_\mu$ and $c_\sigma$, we get outcomes $\{c_1,\dots,
  c_M \}$ and corresponding probabilities $\{p_1,\dots, p_M \}$
  resulting in constant $\Pi$ throughout.  In this way, from
  \eqref{eqn:cpt_risk}, the CPT value of a random cost $c$ with mean
  $c_\mu$ and $c_\sigma$ is given by:
  \begin{equation}
  	\label{eqn:app_CPT_value}
  	\supscr{R}{cpt}(c_\mu,c_\sigma)=\sum_{i=1}^{M} \lambda \left(c_\mu + c_\sigma\left(3 - \frac{6i}{M} \right)  \right)^\gamma \Pi_i.
  \end{equation}
  With this expression, we can proceed to calculate the partial
  derivatives $\frac{\partial \supscr{R}{cpt}_\theta}{\partial \mu}$ and
  $\frac{\partial \supscr{R}{cpt}_\theta}{\partial \sigma}$.
  From~\eqref{eqn:app_CPT_value}, we get
  \begin{subequations} \label{eqn:cpt_partials}
  	\small
  	\begin{align}
  		\label{eqn:app_partial_mu}
  		\frac{\partial \supscr{R}{cpt}}{\partial
  			c_\mu}(c_\mu,c_\sigma)=& \lambda \gamma
  		\sum_{i=1}^{M}\left(c_\mu + c_\sigma\left(3 -
  		\frac{6i}{M} \right) \right)^{\gamma-1}
  		\Pi_i, \\ \label{eqn:app_partial_sigma}
  		\frac{\partial \supscr{R}{cpt}}{\partial
  			c_\sigma}(c_\mu,c_\sigma)=& \lambda \gamma
  		\sum_{i=1}^{M}\left(3 - \frac{6i}{M}
  		\right)\left(c_\mu + c_\sigma\left(3 -
  		\frac{6i}{M} \right) \right)^{\gamma-1}
  		\Pi_i .
  	\end{align}
  \end{subequations}
  We have $\frac{\partial \supscr{R}{cpt}}{\partial
  	c_\mu}(c_\mu,c_\sigma)=k^c_\mu$ and $\frac{\partial
  	\supscr{R}{cpt}}{\partial
  	c_\sigma}(c_\mu,c_\sigma)=k^c_\sigma$. Substituting
  $k^c_\theta$, $k^c_\mu$ and $k^c_\sigma$ in
  \eqref{eqn:angle_condition}, we obtain
  \eqref{eqn:angle_condition_cpt}.
\end{proof}

From~\eqref{eqn:angle_condition}, observe that the RHS is independent
of $u$ and the LHS is independent of $R_c$ and the RPM. This
separation makes it easier to compare  various RPMs and
their associated feasibility conditions.
 
Next, we remark on the uncertainty perception of each RPM, which will
be used in the subsequent proposition to compare the size of control
sets $\supscr{K}{CBF}$ respectively generated by each of the RPMs.

\begin{remark}[Uncertainty perception among RPMs]
	\label{rem:uncertainty_perception}
	The ER model is insensitive to uncertainty as $\frac{\partial
		\supscr{R}{ER}}{\partial \sigma}=0$. In this way, CVaR is
	averse to uncertainty as $\frac{\partial
		\supscr{R}{CV}_q}{\partial \sigma} \geq 0$ for all $q$. With
	CPT, $\theta$ can be tuned to get both uncertainty insensitive
	and uncertainty averse behavior, additionally, it can also
	produce uncertainty liking behavior (when $\frac{\partial
		\supscr{R}{cpt}_\theta}{\partial \sigma} \leq
	0$). \footnote{The first two properties follow by choosing
		$\theta$ as in Theorem~\ref{thm:inclusiveness}. The latter
		property can be obtained by tuning the uncertainty
		perception parameters $\alpha$ and $\beta$. Since the chosen
		distribution is symmetric, we can examine the relation
		between $\Pi_i$ and $\Pi_{M-i}$ for $i \in
		\left(0,\frac{M}{2}\right)$. If we have $\Pi_i < \Pi_{M-i}$
		(for example when $w$ is concave) or $\Pi_i > \Pi_{M-i}$
		(when $w$ is convex), then we have $\frac{\partial
			\supscr{R}{cpt}}{\partial c_\sigma} > 0$, or
		$\frac{\partial \supscr{R}{cpt}}{\partial c_\sigma} < 0$,
		respectively.  A concave $w$ ($\alpha = 1, \beta <1$)
		implies that unlikely outcomes are viewed to be more
		probable compared with the more certain outcomes. This
		results into an ``uncertainty averse behavior'', which is
		reflected in the positive sign of $\frac{\partial
			\supscr{R}{cpt}}{\partial c_\sigma}$. Conversely, a convex
		$w$ ($\alpha = 1, \beta >1$) leads to an ``uncertainty
		liking behavior'' with $\frac{\partial
			\supscr{R}{cpt}}{\partial c_\sigma} < 0$.}.
\end{remark}
We finally compare the
the flexibility provided by each model via the
corresponding control sets $K$.
\begin{proposition}
\label{prop:safety_sets}
Assume the conditions of Proposition~\ref{prop:feasibility}
hold.  Then, the feasibility sets defined according to
\eqref{eqn:k_cbf_set} for the three RPMs satisfy $\subscr{K}{ER}
\subseteq \subscr{K}{CPT}$ and $\subscr{K}{CVaR} \subseteq
\subscr{K}{CPT}$.
\end{proposition}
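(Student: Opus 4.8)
\emph{Proof proposal.} I would start from the normal form exposed inside the proof of Proposition~\ref{prop:feasibility}: for each RPM the feasibility set \eqref{eqn:k_cbf_set} can be rewritten as $\subscr{K}{R}(R_c) = \{u \in \mathcal{U} \mid \tilde{\xi}(u;x,y) \ge -\eta^{\bullet}(\xi)\}$, where $\tilde{\xi}(u;x,y)=\|\dot{\xi}(u;x,y)\|\cos\phi_\xi$ is the \emph{same} affine-in-$u$ quantity for all three models (it does not involve $R_c$), and only the $u$-independent scalar threshold $\eta^{\bullet}\in\{\eta^e,\eta^v,\eta^c\}$ depends on the RPM. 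Hence, for a fixed $\xi$, $\eta^{\bullet_1}(\xi)\le\eta^{\bullet_2}(\xi)$ implies $\subscr{K}{R_1}\subseteq\subscr{K}{R_2}$. The plan is thus to exhibit, for each of ER and CVaR, a CPT parameter $\theta=\{\alpha,\beta,\gamma,\lambda\}$ whose threshold $\eta^c$ equals the corresponding $\eta^e$ (resp.\ $\eta^v$); then every ER (resp.\ CVaR) feasibility set coincides with a CPT one and so is contained in $\subscr{K}{CPT}$, read as the collection of CPT feasibility sets over admissible $\theta$.

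For the ER case I would pick $\bar{\theta}=\{1,1,1,1\}$. As in the proof of Lemma~\ref{lem:rpm_rangespace} this gives $v=\id$, $w=\id$, so $\supscr{R}{cpt}_{\bar{\theta}}\equiv\supscr{R}{ER}=c_\mu$ and $\Pi_i=p_i$; therefore $h_R$, $\tfrac{d\eta_2}{dR}$ and $k^c_{\bar{\theta}}$ coincide with their ER counterparts, $k^c_\mu=\sum_i p_i=1$, and $k^c_\sigma=\sum_i\big(3-\tfrac{6i}{M}\big)p_i=\tfrac{1}{c_\sigma}\big(\sum_i c_i p_i-c_\mu\big)=0$ because the truncated-Gaussian discretization is symmetric about $c_\mu$. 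Hence $\eta^c(\bar{\theta})=k^e/\|c_\mu'\|=\eta^e$ and $\subscr{K}{CPT}(\supscr{R}{cpt}_{\bar{\theta}})=\subscr{K}{ER}$.

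For the CVaR case, fix a level $q$ and restrict to $\theta$ with $\gamma=\lambda=1$. By \eqref{eqn:app_CPT_value}, $\supscr{R}{cpt}_\theta=c_\mu+c_\sigma\sum_i(3-\tfrac{6i}{M})\Pi_i$, which is affine in $(c_\mu,c_\sigma)$ with $k^c_\mu=1$ and $k^c_\sigma=\sum_i(3-\tfrac{6i}{M})\Pi_i$, mirroring the closed form \eqref{eqn:cvar_closed_form}. It then suffices to choose $(\alpha,\beta)$ so that $\sum_i(3-\tfrac{6i}{M})\Pi_i=k^v_\sigma$: by Remark~\ref{rem:uncertainty_perception} this sum is $0$ for $w=\id$, positive for concave $w$ and negative for convex $w$, so it sweeps an interval as $(\alpha,\beta)$ ranges over $\realpositive^2$; since $\Pi_i$ is continuous in $(\alpha,\beta)$ and $k^v_\sigma\in[0,\,\max_i(3-\tfrac{6i}{M})]$, the intermediate value theorem supplies an admissible $\theta$. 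For that $\theta$, $\supscr{R}{cpt}_\theta\equiv\supscr{R}{CV}_q$, hence $k^c_\theta=k^v_q$ and $\eta^c(\theta)=k^v_q/\|c_\mu'+k^v_\sigma c_\sigma'\|=\eta^v$, so $\subscr{K}{CPT}(\supscr{R}{cpt}_\theta)=\subscr{K}{CVaR}(\supscr{R}{CV}_q)$; this holds for every $q$, giving $\subscr{K}{CVaR}\subseteq\subscr{K}{CPT}$.

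The main obstacle I expect is the CVaR step: one must confirm that the \emph{two} scalar parameters $\alpha,\beta$ are enough to hit the target moment $k^v_\sigma$ for \emph{every} admissible $q$. This needs a careful treatment of the cumulative construction of the $\Pi_i$ (the outcome ordering and the sign conventions noted in the footnote to Remark~\ref{rem:uncertainty_perception}), a monotonicity/range argument for $\sum_i(3-\tfrac{6i}{M})\Pi_i$ as $w$ interpolates between convex and concave shapes, and consistency in using the discretized CVaR so that $k^v_\sigma$ genuinely lies in the achievable range. The ER step is, by comparison, immediate once $\bar{\theta}=\{1,1,1,1\}$ collapses CPT onto ER.
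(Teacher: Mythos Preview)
Your reduction is exactly the one the paper uses: since the LHS $\tilde\xi(u;x,y)$ in \eqref{eqn:angle_condition} is common to all three RPMs, the containment $\subscr{K}{ER}\subseteq\subscr{K}{CPT}$ and $\subscr{K}{CVaR}\subseteq\subscr{K}{CPT}$ reduces to exhibiting, for each ER/CVaR instance, a CPT parameter $\theta$ with $\eta^e\le\eta^c(\theta)$ (resp.\ $\eta^v\le\eta^c(\theta)$). The paper's proof stops right there and simply points to $\theta_1$ from Theorem~\ref{thm:inclusiveness} together with Remark~\ref{rem:uncertainty_perception} to claim those inequalities.

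Where you differ is in the second step: instead of arguing an \emph{inequality} on the thresholds, you go for \emph{equality} by constructing a $\theta$ that makes $\supscr{R}{cpt}_\theta$ literally coincide with $\supscr{R}{ER}$ (via $\bar\theta=\{1,1,1,1\}$) or with $\supscr{R}{CV}_q$ (via $\gamma=\lambda=1$ and an IVT choice of $(\alpha,\beta)$ matching the moment $k^v_\sigma$). This buys you a sharper statement---each ER/CVaR feasibility set is not merely contained in but \emph{equal to} some CPT feasibility set---at the cost of the surjectivity argument you flag as the main obstacle. The paper avoids that obstacle by only needing $\eta^c$ to dominate, which the cruder parameter $\theta_1=\{1,1,1,\lambda\}$ together with the sign flexibility of $\partial\supscr{R}{cpt}/\partial c_\sigma$ from Remark~\ref{rem:uncertainty_perception} already delivers. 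Your route is more explicit and yields more; the paper's is terser and sidesteps the range/IVT verification for $\sum_i(3-6i/M)\Pi_i$.
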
  
\begin{proof} In order to compare the feasibility of the sets
  $\subscr{K}{CBF}$ from~\eqref{eqn:k_cbf_set} for the three RPMs, we
  can compare their respective feasibility
  conditions~\eqref{eqn:angle_condition}. Consider $\eta^e=
  \frac{k^e(\xi)}{\left\| \frac{dc_\mu}{d \xi} \right\|}$,
  $\eta^v=\frac{k^v_q(\xi)}{\left\| \frac{dc_\mu}{d \xi} + k^v_\sigma
      \frac{dc_\sigma}{d \xi} \right\|}$ and
  $\eta^c=\frac{k^c_\theta(\xi)}{\left\| k^c_\mu \frac{dc_\mu}{d \xi}
      + k^c_\sigma \frac{dc_\sigma}{d \xi} \right\|}$. Since the LHS
  in~\eqref{eqn:angle_condition} remains the same for any RPM and its
  parameter choice, to prove the proposition, it is sufficient to show
  that $\eta^e \leq \eta^c$ and $\eta^v \leq \eta^c$. These
  inequalities follow from the choice of $\theta=\theta_1$ in
  Theorem~\ref{thm:inclusiveness} and CPT's more adaptable uncertainty
  perception from Remark~1.
 \end{proof} 
 It is interesting to note that although CVaR is more inclusive than
 ER as proved in Theorem~\ref{thm:inclusiveness}, it does not
 immediately translate into CVaR having a larger control feasibility
 set. We provide more insight in the following remark.
\begin{remark}
  Consider the control feasibility sets $\subscr{K}{ER}$ and
  $\subscr{K}{CVaR}$ respectively for ER and CVaR, defined according
  to~\eqref{eqn:k_cbf_set}. Then, depending on the choice of $q$ and
  construction of $c_\sigma$ we can obtain either $\subscr{K}{ER}
  \subseteq \subscr{K}{CVaR}$ or $\subscr{K}{CVaR} \subseteq
  \subscr{K}{ER}$. Looking at the LHS of inequalities
  \eqref{eqn:angle_condition_cvar} and \eqref{eqn:angle_condition_er},
  although we have $k_q^v (\xi) > k^e (\xi)$ from
  Theorem~\ref{thm:inclusiveness}, there isn't conclusive proof to
  suggest $\subscr{K}{ER} \subseteq \subscr{K}{CVaR}$ due to the
  additional $k^v_\sigma \frac{d c_\sigma}{d \xi}(\xi)$ term in the
  denominator of \eqref{eqn:angle_condition_cvar}.
\end{remark}
\paragraph*{Stability analysis}
Next, let us look at the stability properties of the proposed
controller $u$ in~\eqref{eqn:clbf_opt}. It is clear that if the
nominal controller $k(x)$ also satisfies the safety
constraint~\eqref{eqn:clbf_opt_2}, then $u=k(x)$ and the stability
properties of $k(x)$ transfer over to $u$. To analyze stability, first
we look into the RPMs and determine how they affect the deviation from
$k(x)$. Later, we treat the controller $u$ as a perturbed version of
$k(x)$ and analyze accordingly.

Let $\delta=k(x)-u(x)$ be the perturbation to the nominal controller
$k(x)$ and $\supscr{\delta}{ER}$, $\supscr{\delta}{CV}_q$ and
$\supscr{\delta}{cpt}_\theta$ be the respective perturbations of ER,
CVaR and CPT with corresponding parameter choices. Then we have the
following:

\begin{proposition}
	\label{prop:stability_margin}
	Under the assumptions of Proposition~\ref{prop:feasibility},
	choose $u$ as in~\eqref{eqn:clbf_opt}. Assume
	$\|\supscr{\delta}{ER}\|$, $\|\supscr{\delta}{CV}_q\|$ and
	$\|\supscr{\delta}{cpt}_\theta\|$ are bounded. Then for any
	given states $x$, $y$, and choice of $q$, there exists a
	$\theta$ such that:
	\begin{enumerate}
		\item $\| \supscr{\delta}{cpt}_\theta\| \leq \| \supscr{\delta}{ER} \|$ and $\|
		\supscr{\delta}{cpt}_\theta\| \leq \| \supscr{\delta}{CV}_q \|$.
		\item The agent stabilizes inside
		$\mathbf{B}^{\epsilon^\mathcal{M}}(x^*)$ asymptotically for
		all RPMs and their respective $\epsilon$ follow
		$\supscr{\epsilon}{CPT}\leq \supscr{\epsilon}{CVaR}$ and
		$\supscr{\epsilon}{CPT} \leq \supscr{\epsilon}{ER}$.
	\end{enumerate} 
	
\end{proposition}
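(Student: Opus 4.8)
\emph{Proof proposal.} The plan is to express $\delta^{\mathcal{M}}=k(x)-u^{\mathcal{M}}(x)$, the perturbation induced by \eqref{eqn:clbf_opt}, as a projection residual, use the set inclusions of Proposition~\ref{prop:safety_sets} to order these residuals, and then treat $u$ as a perturbed nominal feedback to order the attractor radii.

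First I would note that, writing the safety constraint through the equivalent angle form \eqref{eqn:angle_condition} of Proposition~\ref{prop:feasibility}, the feasible set of \eqref{eqn:clbf_opt} for an RPM $\mathcal{M}$ is exactly the corresponding control set $K_{\mathcal{M}}$ of \eqref{eqn:k_cbf_set} (namely $\subscr{K}{ER}$, $\subscr{K}{CVaR}$, or $\subscr{K}{CPT}$), and \eqref{eqn:clbf_opt} computes the Euclidean projection of $k(x)$ onto $K_{\mathcal{M}}$. Hence $u^{\mathcal{M}}(x)=k(x)$ and $\delta^{\mathcal{M}}=0$ whenever $k(x)\in K_{\mathcal{M}}$, and otherwise $\|\delta^{\mathcal{M}}\|=\operatorname{dist}(k(x),K_{\mathcal{M}})$, the distance from $k(x)$ to a half-space whose offset is set by the scalar threshold on the right-hand side of \eqref{eqn:angle_condition}, i.e.\ $\eta^{e}$, $\eta^{v}$, $\eta^{c}$ for ER, CVaR, CPT respectively. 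Choosing $\theta=\theta_1$ as in Theorem~\ref{thm:inclusiveness} (with the sign of $\partial\supscr{R}{cpt}_\theta/\partial c_\sigma$ fixed via Remark~\ref{rem:uncertainty_perception}), Proposition~\ref{prop:safety_sets} gives $\subscr{K}{ER}\subseteq\subscr{K}{CPT}$ and $\subscr{K}{CVaR}\subseteq\subscr{K}{CPT}$ at the given $x,y,q$. Since the distance from a point to a larger set cannot increase, $\operatorname{dist}(k(x),\subscr{K}{CPT})\le\operatorname{dist}(k(x),\subscr{K}{ER})$ and $\operatorname{dist}(k(x),\subscr{K}{CPT})\le\operatorname{dist}(k(x),\subscr{K}{CVaR})$, that is $\|\supscr{\delta}{cpt}_\theta\|\le\|\supscr{\delta}{ER}\|$ and $\|\supscr{\delta}{cpt}_\theta\|\le\|\supscr{\delta}{CV}_q\|$, which is item~1.

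For item~2 I would treat $u=k(x)-\delta^{\mathcal{M}}$ as a bounded additive perturbation of the nominal loop, $\dot x=f(x)+G(x)k(x)-G(x)\delta^{\mathcal{M}}$. Since $k$ renders $x^{*}$ asymptotically stable and $f,G$ are locally Lipschitz, that equilibrium is robust to small bounded perturbations (total stability / local ISS); hence, with $\|\delta^{\mathcal{M}}\|$ bounded by hypothesis, every trajectory converges asymptotically to a ball $\mathbf{B}^{\epsilon^{\mathcal{M}}}(x^{*})$ whose radius satisfies $\epsilon^{\mathcal{M}}=\beta\big(\sup_{t\ge0}\|\delta^{\mathcal{M}}(t)\|\big)$ for a class-$\mathcal{K}$ function $\beta$ built from the ISS gain of $k$ and a bound on $\|G\|$ over the compact $\mathcal{X}$. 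Applying item~1 pointwise in $t$ and using the monotonicity of $\beta$ then yields $\supscr{\epsilon}{CPT}\le\supscr{\epsilon}{CVaR}$ and $\supscr{\epsilon}{CPT}\le\supscr{\epsilon}{ER}$, completing the argument.

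The step I expect to be the main obstacle is item~2: turning ``$k$ asymptotically stabilizes $x^{*}$'' plus ``$\delta^{\mathcal{M}}$ bounded'' into an ultimate bound $\epsilon^{\mathcal{M}}$ that is a \emph{monotone} function of $\sup_t\|\delta^{\mathcal{M}}\|$ requires a (local) ISS-type margin rather than mere asymptotic stability; I would obtain this from a converse-Lyapunov function on a compact sublevel set containing $x^{*}$, and would keep $\beta$ explicit enough that the ordering from item~1 is preserved. A lesser point in item~1 is that identifying the feasible set of \eqref{eqn:clbf_opt} with $K_{\mathcal{M}}$ and reducing its offset to the single scalar $\eta^{\mathcal{M}}$ rests on the decomposition of Proposition~\ref{prop:feasibility} (right-hand side independent of $u$, left-hand side common across RPMs), which is precisely what makes the projection-distance comparison valid.
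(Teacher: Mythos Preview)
Your proposal is correct and follows essentially the same approach as the paper: for item~1 the paper simply invokes Proposition~\ref{prop:safety_sets} and the inclusions $\subscr{K}{ER}\subseteq\subscr{K}{CPT}$, $\subscr{K}{CVaR}\subseteq\subscr{K}{CPT}$, while your projection-residual framing makes explicit why those inclusions yield the ordering of $\|\delta^{\mathcal M}\|$; for item~2 the paper likewise treats $P(x)=G(x)\delta$ as the forcing term and appeals to an ISS argument with the stability-ball radius proportional to the bound on $\|P(x)\|$, exactly as you do. Your caveat that ISS (not mere asymptotic stability of $k$) is what actually licenses a monotone ultimate bound is well placed---the paper's proof tacitly assumes this.
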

\begin{proof}
	For $1)$, apply~Proposition~\ref{prop:safety_sets} and the fact that
	$\subscr{K}{ER} \subseteq \subscr{K}{CPT}$ and $\subscr{K}{CVaR}
	\subseteq \subscr{K}{CPT}$.
	
	For $2)$, employ an ISS argument to construct the
	$\mathbf{B}^{\epsilon^\mathcal{M}}(x^*)$ for each RPM considering
	the unforced system with $u=k(x)$ in \eqref{eqn:agent_dyn} and
	$P(x)=G(x)\delta$ being the forcing term after applying RPA controls
	$u$ from \eqref{eqn:clbf_opt}. From ISS, since the radius of the
	stability ball is proportional to the upper bound on $\|P(x)\|$, the
	result immediately follows from the first part. 
\end{proof}
Proposition~\ref{prop:stability_margin} implies that, with an
appropriate  $\theta$, CPT can not only produce the least
perturbation among the three RPMs, but can also stabilize to the
smallest ball around~$x^*$.
\section{Simulation Results}
\label{sec:results}
Here, we visualize the results from Theorem~\ref{thm:inclusiveness}
and demonstrate the effectiveness of the controller generated in
\eqref{eqn:clbf_opt}. We consider a few scenarios involving an agent
moving in an 2D environment containing one or more moving obstacles
(sources of uncertain risk) and use this to compute the
RPA-CBF~\eqref{eqn:cbf_cpt} to guide the agent to a desired goal
safely. We compare CPT, CVaR and ER as RPA models and illustrate the
results followed by a discussion.

\emph{Uncertain Cost:}  We assume an agent $x \in
  \real^2$ with dynamics~\eqref{eqn:agent_dyn} in a 2D state space
  containing an obstacle $y \in \real^2$ moving according
  to~\eqref{eqn:obs_dyn}. We assume that the obstacle is imperfectly
  localized and is known to be within a ball of radius
  $\overline{r}$ centered at $y_\mu \in \mathcal{X}$, i.e , $y \in
  \mathbf{B}^{\overline{r}}(y_\mu)$\footnote{W.l.o.g.~this assumption
    also allows us to consider obstacles with a size.}.
 With this, the
  relative vector $\xi = y-x$ belongs to the space: $\xi \in
  \mathbf{B}^{\overline{r}}(y_\mu-x)$.  We use the notion of
  ``distance to endangerment (DTE)'', $d : \real^n \rightarrow
  \realnonnegative$, $d(\xi) \triangleq \| \xi \|$ to construct the
  uncertain cost $c$. From this, we obtain $d
  \in[\|x-y_\mu\|-\overline{r},\|x-y_\mu\|+\overline{r}]$.
  (visualized in Figure~\ref{fig:exp_setting1}). We consider the cost
$c(\xi)=~k_1 \mathrm{e}^{-k_2d(\xi)^2}$,
denoting the cost of being
at $x$, knowing 
the obstacle $y \in \mathbf{B}^{\overline{r}}(y_\mu)$, with constants $k_1,k_2 > 0$.
 \begin{figure}
 	\centering
 	\includegraphics[width=0.6\linewidth]{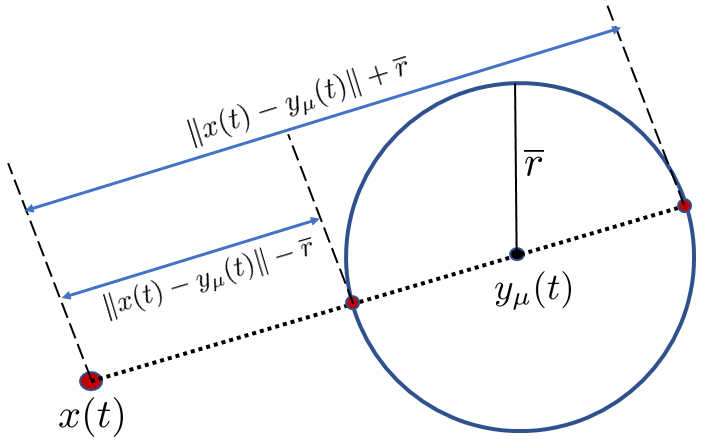}
 	\caption{Illustration of simulation setting and measuring DTE
          for an agent at $x(t)$, facing an obstacle which is
          localized imperfectly in a circle of radius $\overline{r}$
          and centered at $y_\mu(t)$. }
 	\label{fig:exp_setting1}
 \end{figure}

 With this, we assume the cost $c$ is distributed as a truncated
 Gaussian (Section~\ref{sec:risk_perception}) with $c_\mu(\xi)=k_1
 \mathrm{e}^{-k_2 d_\mu^2(\xi)}$
  and $c_{\sigma}(\xi)= c_\mu(\overline{r})
 p^\mathcal{N}(\xi,\mathbf{I})$, where $d_\mu=\|x-y_\mu\|$ and
 $p^\mathcal{N}(\mu,\Sigma)$ is the pdf of a bi-variate Normal distribution with mean
 $\mu$ and covariance $\Sigma$ and $\mathbf{I}$ is the 2D identity
 matrix.
 We proceed to construct the uncertain cost outcomes
 according to Section~\ref{sec:risk_perception} 
and then calculate $\supscr{R}{cpt},R^v,R^e$
 appropriately. We use the reference value
 $\rho=c_{\mu}(\overline{r})$, to denote the risk threshold.

 \begin{figure*}
 	\begin{subfigure}[t]{0.23\linewidth}
 		\includegraphics[width=\textwidth]{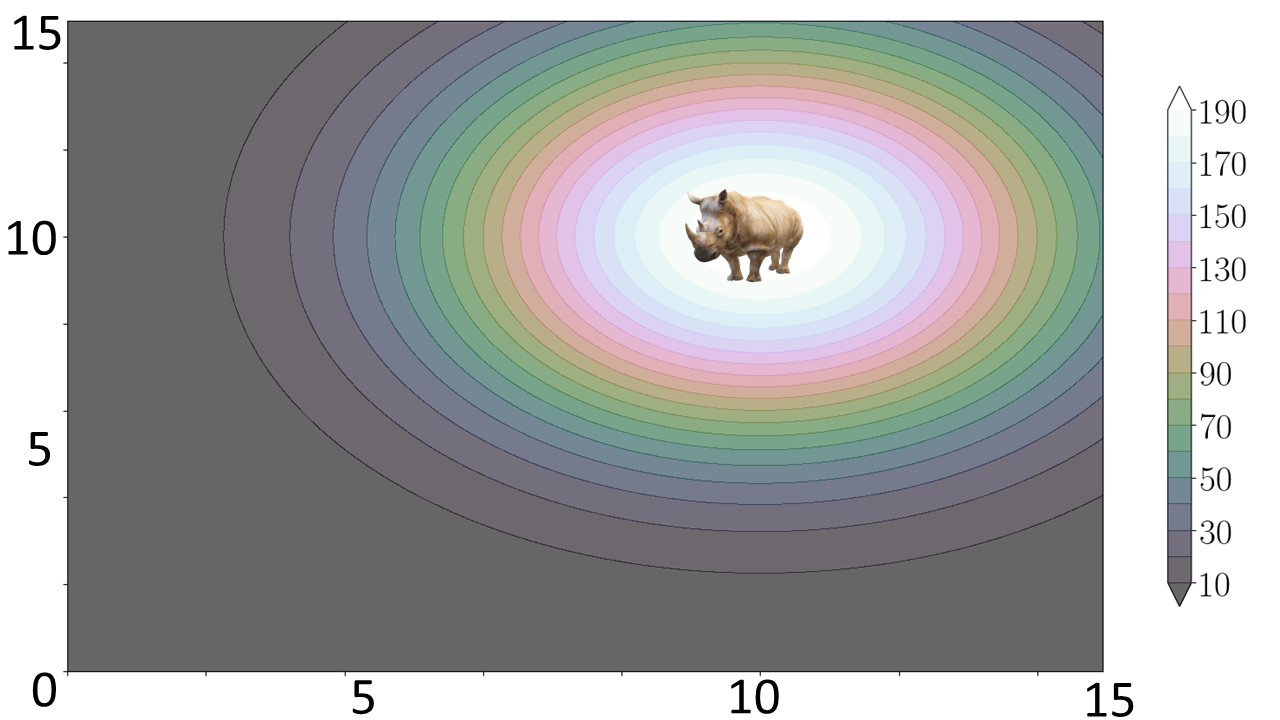}
 		\caption{Mean cost $c_\mu$ over $\mathcal{X}$}
 		\label{fig:c_mu}
 	\end{subfigure}%
 	~
 	\begin{subfigure}[t]{0.23\linewidth}
 		\includegraphics[width=\textwidth]{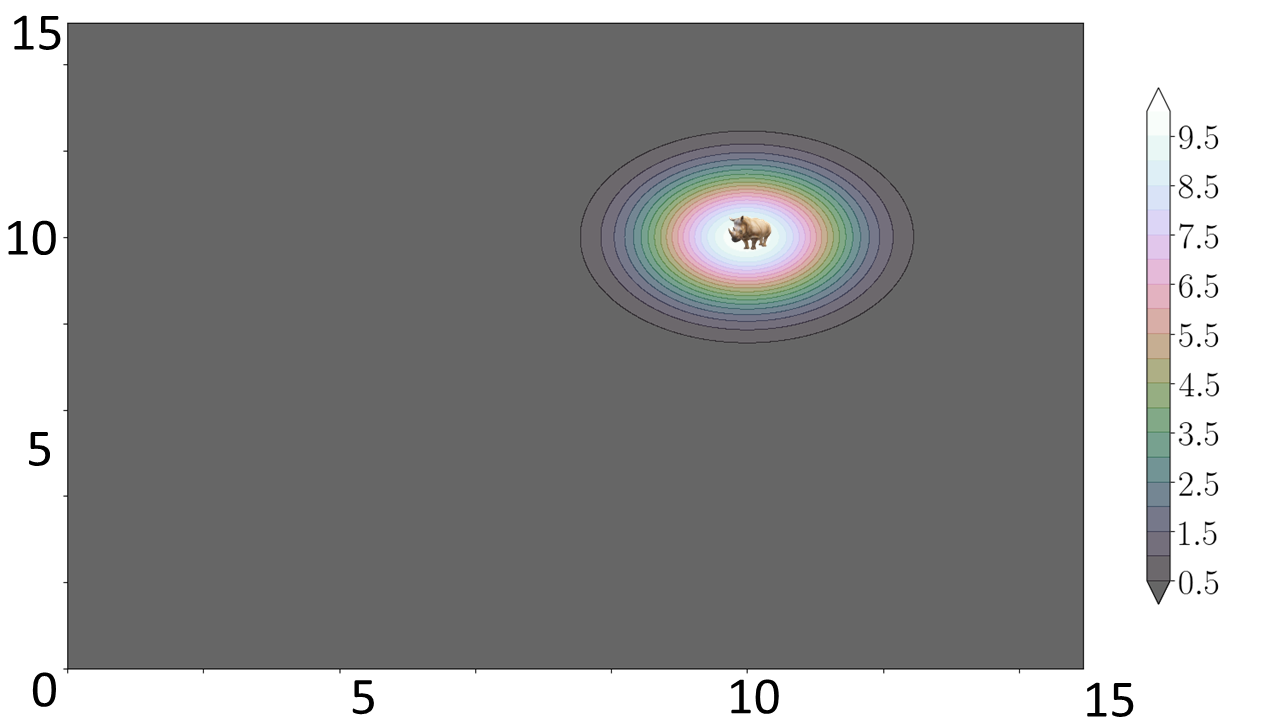}
 		\centering
 		\caption{Standard deviation $c_\sigma$ over $\mathcal{X}$}
 		\label{fig:c_sigma}
 	\end{subfigure}
 	~
 	\begin{subfigure}[t]{0.25\linewidth}
 		\includegraphics[width=\textwidth]{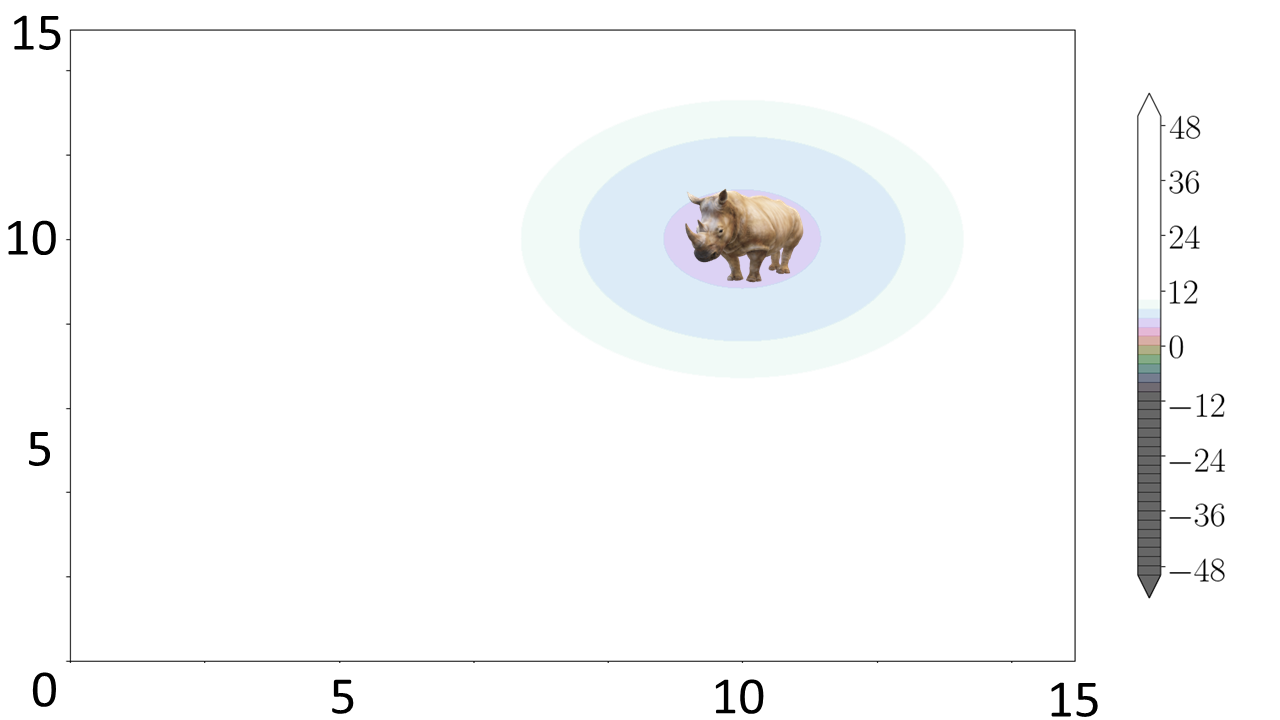}
 		\centering
 		\caption{Highly risk insensitive perception ($\gamma=0.45$ and $\rho=27$) with contour map depicting $\rho-\supscr{R}{cpt}$}
 		\label{fig:versatility_insensitive}
 	\end{subfigure}
 	~
 	\begin{subfigure}[t]{0.25\linewidth}
 		\includegraphics[width=\textwidth]{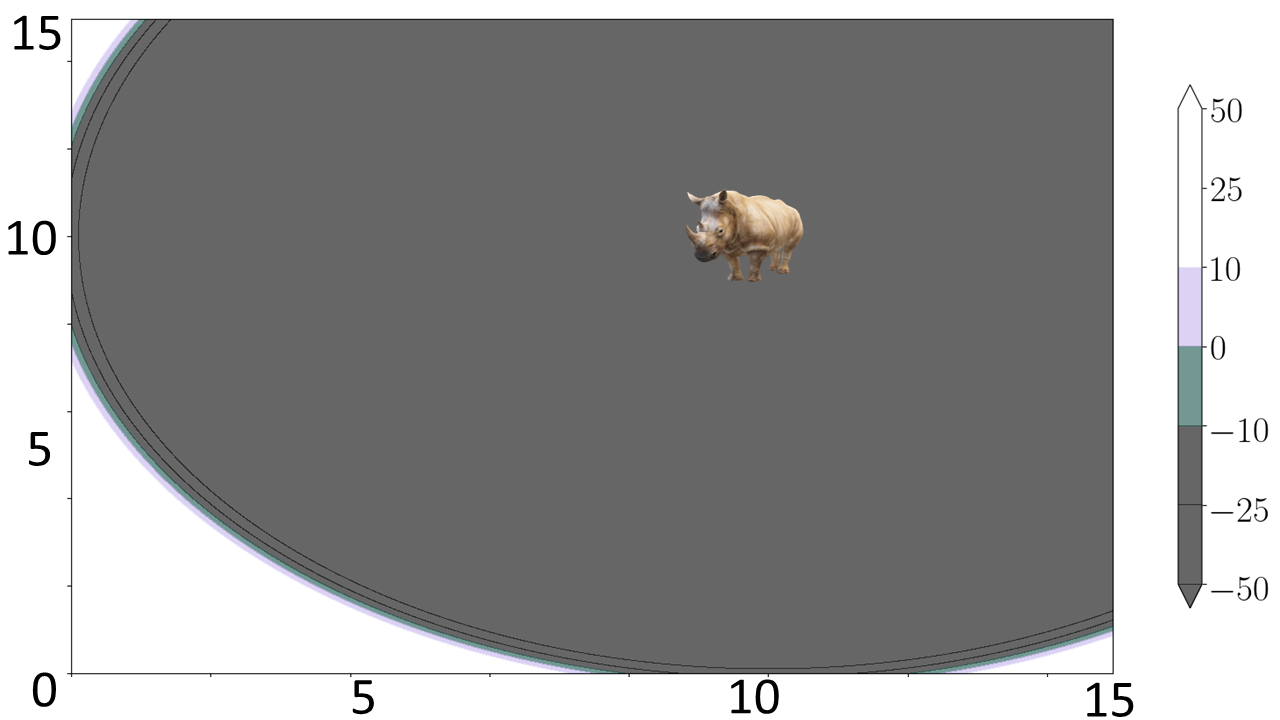}
 		\centering
 		\caption{Highly risk averse perception ($\lambda=100$ and $\rho=199$) with contour map depicting $\rho-\supscr{R}{cpt}$}
 		\label{fig:versatility_averse}
 	\end{subfigure}
	 ~
	 \begin{subfigure}[t]{0.33\linewidth}
	 	\includegraphics[width=\textwidth]{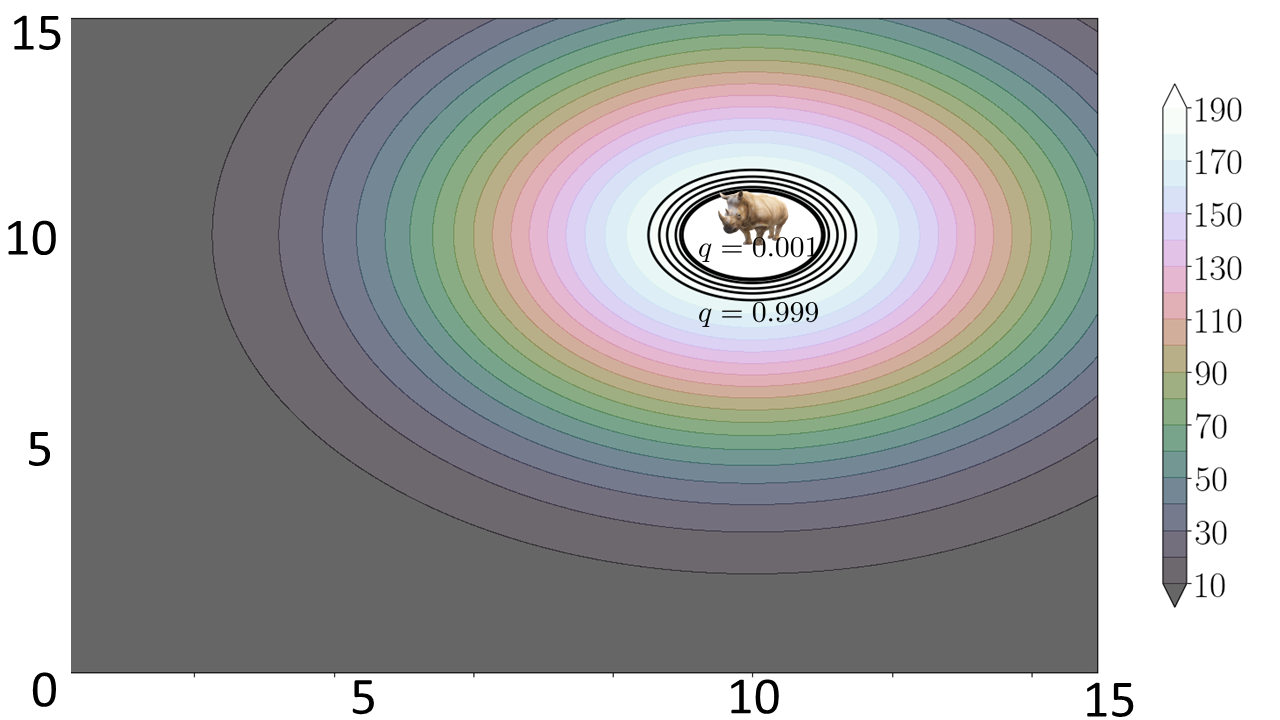}
	 	\centering
	 	\caption{Level sets of $R^q(\xi)=\rho$ by varying CVaR
	 		parameter $q \in \{0.001,0.1,0.4,0.8,0.95,0.999\}$}
	 	\label{fig:level_set_cvar}
	 \end{subfigure}
 	~
 	\begin{subfigure}[t]{0.33\linewidth}
 		\includegraphics[width=\textwidth]{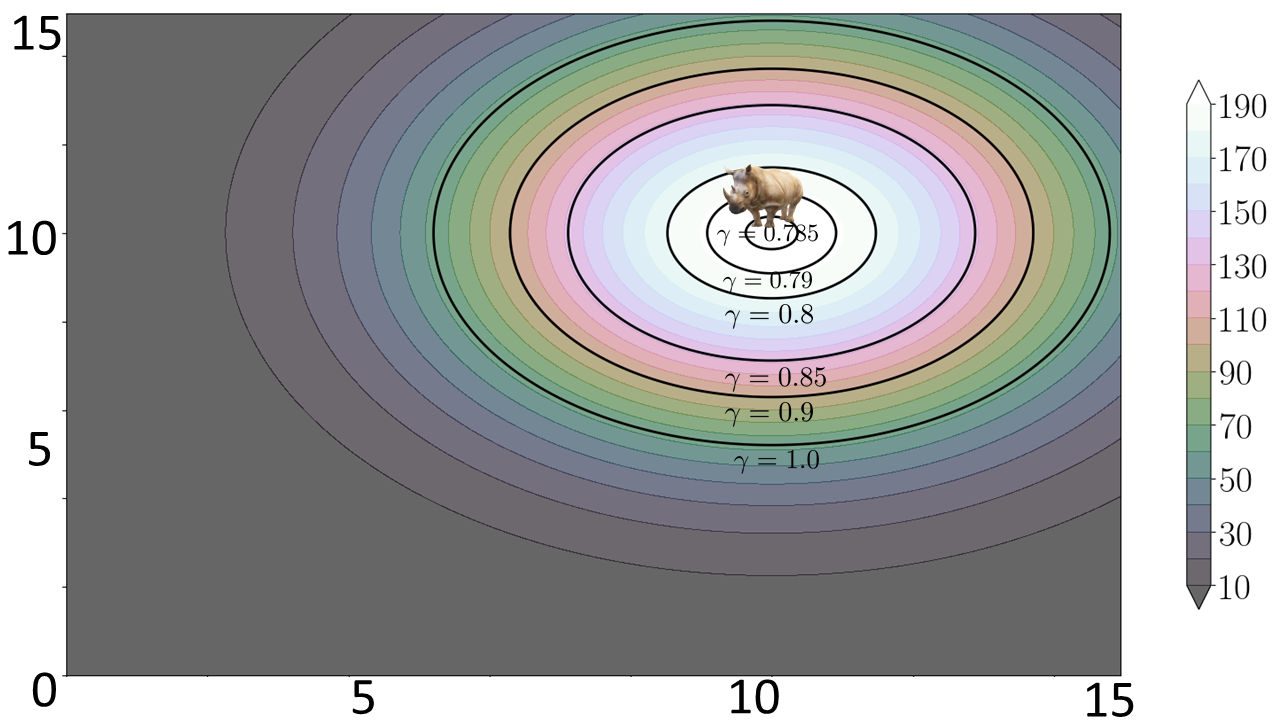}
 		\caption{Level sets of $\supscr{R}{cpt}(\xi)=\rho$ by varying risk
 			sensitivity parameter $\gamma \in
 			\{0.785,0.79,0.8,0.85,0.9,1.0\}$ with
 			$\alpha=0.74,\beta=1,\lambda=3.0$ }
 		\label{fig:level_set_gamma}
 	\end{subfigure}%
 	~
 	\begin{subfigure}[t]{0.33\linewidth}
 		\includegraphics[width=\textwidth]{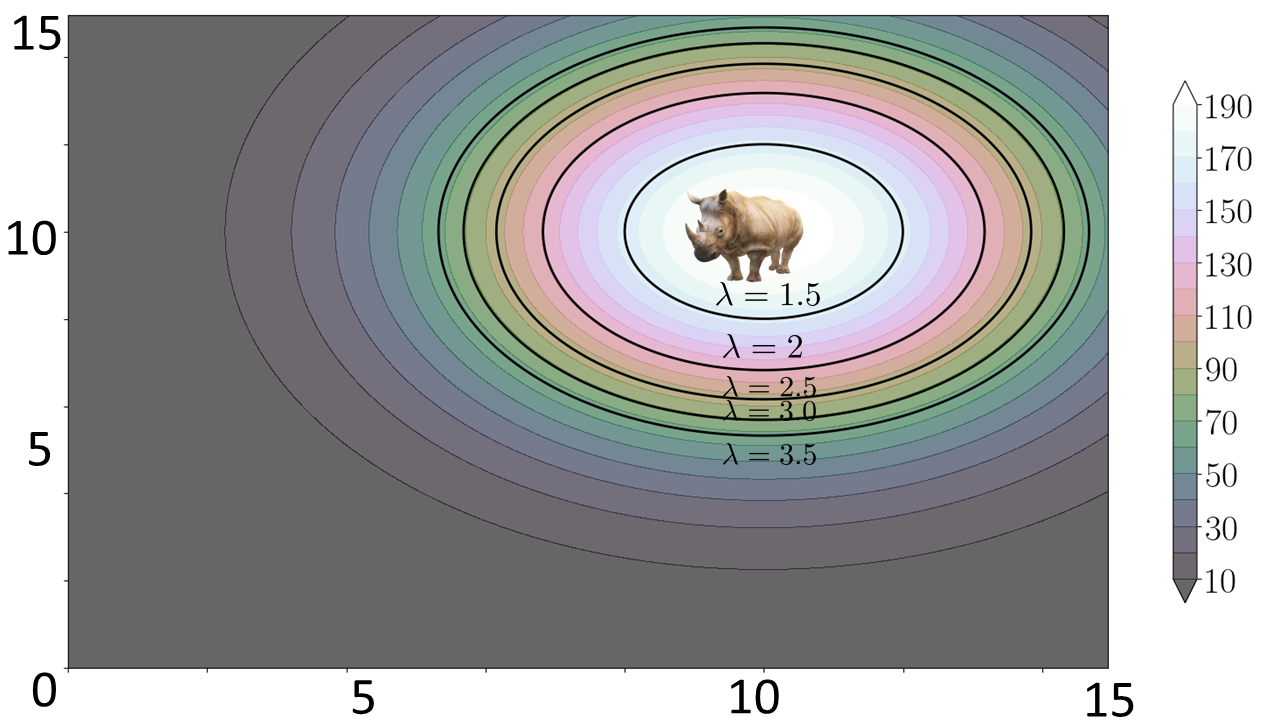}
 		\centering
 		\caption{Level sets of $\supscr{R}{cpt}(\xi)=\rho$ by varying risk
 			aversion parameter $\lambda \in \{1.5,2.0,2.5,3.0,3.5\}$
 			with $\alpha=0.74,\beta=1,\gamma=0.95$}
 		\label{fig:level_set_lambda}
 	\end{subfigure}
 	\caption[Caption]{(a)-(b) Cost maps $c_\mu$ and $c_\sigma$ over the space $\mathcal{X}=[0,15]^2$
          with obstacle's mean position at $y_\mu=\{10,10\}$. (c)-(d) Illustration of versatility of CPT through depicting scalar field $\rho - \supscr{R}{cpt}$ in previous setting. (e)-(f) Change in perceived safety sets
          $\subscr{\mathcal{X}}{safe}$ by varying CPT parameters $\gamma$ and $\lambda$. (g) Change in perceived safety sets
          $\subscr{\mathcal{X}}{safe}$ by varying CVaR parameter $q$  } 
 	\label{fig:cost_maps}
 \end{figure*}


 \emph{Perceived safety visualization:} Under the previous setting, we
 provide visualizations of the costs and perceived risks, shown in
 Fig.~\ref{fig:cost_maps}. Fig.~\ref{fig:c_mu} and~\ref{fig:c_sigma}
 show the mean cost $c_\mu$ and standard deviation $c_\sigma$
 respectively, across $\mathcal{X}=[0,15]^2$ with obstacle's mean
 position at $y=(10,10)$ and $\subscr{c}{min}=1$ and
 $\subscr{c}{max}=200$.
 \emph{Versatility:} CPT's versatility is illustrated in
 Fig.~\ref{fig:versatility_insensitive} and
 Fig.~\ref{fig:versatility_averse} through contour maps of $h_R(\xi)=\rho
 - \supscr{R}{cpt}(x)$ 
 across $\mathcal{X}$. Fig.~\ref{fig:versatility_insensitive} shows
 that despite risk threshold $\rho$ being very small ($\rho=27$) and
 close to $\subscr{c}{min}$, the entire space is perceived safe with
 positive $h$ values. In Fig.~\ref{fig:versatility_averse}, we observe
 the opposite, where a very high risk threshold value ($\rho=199$),
 close to $\subscr{c}{max}$ still makes almost the entire
 $\mathcal{X}$ unsafe with negative $h_R(\xi)$ values. This
 illustrates the versatility of CPT as an RPM in accordance with
 Proposition~\ref{prop:versatility}.
 
  \emph{Inclusiveness:}
 This concept is illustrated in
 Fig.~\ref{fig:level_set_cvar}--\ref{fig:level_set_lambda}.  The black
 lines indicate the level sets of $\supscr{R}{cpt}=\rho$ and
 $\supscr{R}{CV}=\rho$ evaluated by varying their respective
 parameters. From Fig.~\ref{fig:level_set_cvar} it is clear that
 variation in the level sets of CVaR is marginal compared to CPT
 (Fig.~\ref{fig:level_set_gamma} and \ref{fig:level_set_lambda}). The
 level set $\supscr{R}{ER}(\xi)=\rho$ is shown in
 Fig.~\ref{fig:level_set_cvar} as the inner most ellipse. We see that
 CPT is able to capture a more risk averse (larger) as well as more
 risk insensitive (smaller) perception than CVaR (and ER). This
 verifies the claims of Theorem~\ref{thm:inclusiveness} visually.
%

\emph{RPA-CBF controller:} We consider a single agent with
unicycle dynamics and a single obstacle whose dynamics evolve in the
space $\mathcal{X} \subset \real^2$. We use the costs defined in the
previous paragraph with $\overline{r}=0.5$ and $k_1=200,k_2=0.01$. The
agent starts at $x(0)=(5,2)^\top$ (green dot)
and its goal is $x^*=(10,10)^\top$ (motion up) while the obstacle
moves from $(13,13)^\top$ (rhino in red ellipse) to $(2,3)^\top$ (motion down).  If obstacle
and vehicle follow along straight paths, a collision would occur and
safety would be violated. To handle unicycle dynamics we use the
projected point method to control a virtual point $p \in
\real^2$, a distance $l$ along the direction of its heading. That is,
$p=x + l\vec{d}$, where $\vec{d}$ is the direction vector
corresponding to the agents heading $\phi$ and $x \in \real^2$ is the
planar coordinates of the agent. With this we get the reverse
transformation for the control inputs:
\begin{equation}
	\begin{bmatrix}
		u_1 \\ u_2	\end{bmatrix} =
	\begin{bmatrix}
		\cos(\phi) & \sin(\phi) \\
		\frac{-\sin(\phi)}{l} & \frac{\cos(\phi)}{l} 
	\end{bmatrix} u.
\label{eqn:point_offset_trans} 
\end{equation}
Where $u_1,u_2 \in \real$ are the linear and angular velocity inputs
of the unicycle model and $u \in \real^2$ is the optimized input
generated from~\eqref{eqn:clbf_opt} considering the $p$ dynamics
$\dot{p}=u$.  We use a standard proportional controller for $k(x)$ with a constant
$(0.6,0.6)$ 
We note that one can always
appropriately tune the reference value $\rho$ by $l$ units to ensure
safety w.r.t. $x$. The results of varying $\lambda$, $\gamma$ and
$\kappa$ are shown in
Fig.~\ref{fig:results_traj_unicycle}. 
For all the settings, the agent will collide with the obstacle (red ellipse) if it
 follows the nominal path (black line) from applying controls $k(x)$, thus making it
  unsafe. By using the controller $u$ from \eqref{eqn:clbf_opt}, the agent is able to
   swerve away from the obstacle in time and still manage to reach the goal. We see 
   that from Fig.~\ref{fig:cbf_averse_uni} - Fig.~\ref{fig:cbf_cvar_uni}, the cbf $h_R$
    remains positive throughout the execution, thus indicating that perceived safety is 
    maintained irrespective of model and parameter choice. Next, we notice that by using 
    CPT-CBF controller $u$ (Fig.~\ref{fig:cbf_averse_uni} and 
    Fig.~\ref{fig:cbf_sensitive_uni}), the deviations from the nominal path correspondingly
     get more pronounced as the perceived risk increases (higher $\lambda$ and $\gamma$). 
     Whereas, for CVaR-CBF this deviation (Fig.~\ref{fig:cbf_cvar_uni}) is comparatively 
     minimal across its parameter spectrum. This is in accordance with our claims that 
     CVaR is less inclusive (Theorem~\ref{thm:inclusiveness}) and versatile 
     (Proposition~\ref{prop:versatility}) than CPT, causing only minor changes in 
     trajectories in comparison with CPT-based CBF controller.

 \begin{figure*}
	\begin{subfigure}[t]{0.30\linewidth}
		\includegraphics[width=\textwidth]{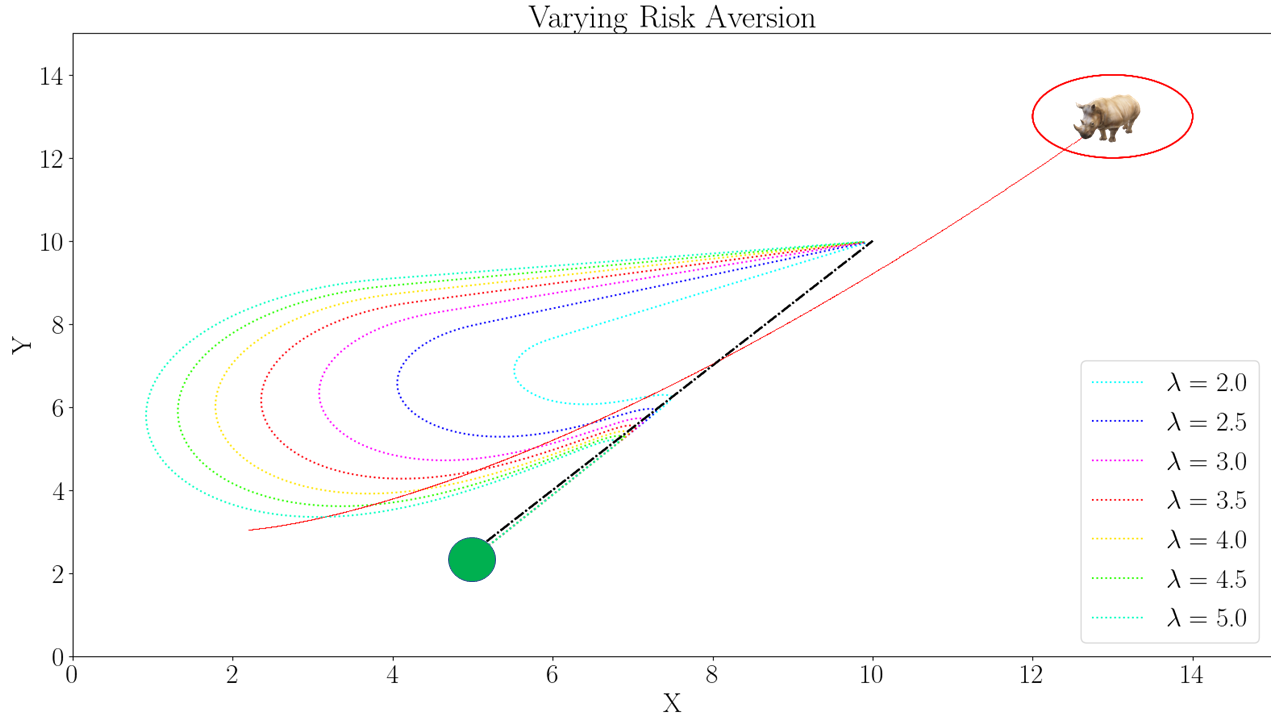}
		\caption{Change in paths due to change in risk
                  aversion $\lambda$ with $\gamma=0.88$ }
		\label{fig:path_averse_uni}
	\end{subfigure}%
	~
	\begin{subfigure}[t]{0.30\linewidth}
		\includegraphics[width=\textwidth]{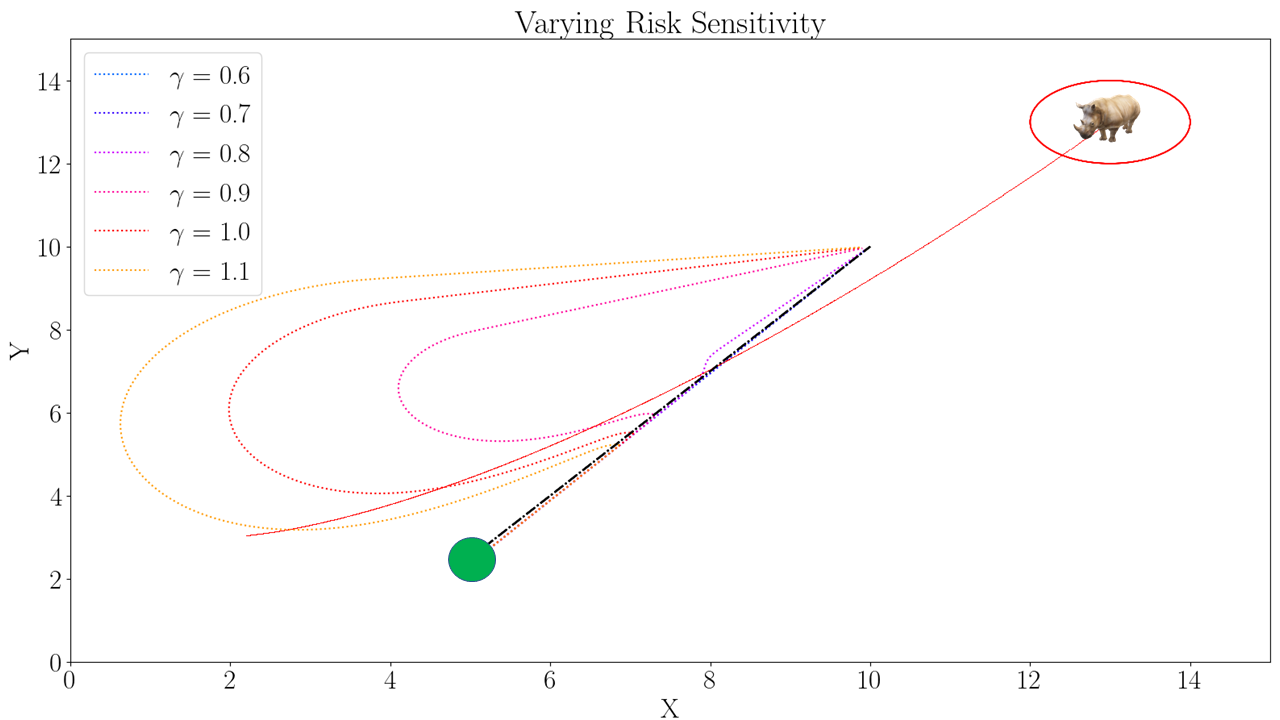}
		\centering
		\caption{Change in paths due to change in risk sensitivity $\gamma$ with $\lambda=2.25$}
		\label{fig:path_sensitive_uni}
	\end{subfigure}
	~
	\begin{subfigure}[t]{0.30\linewidth}
		\includegraphics[width=\textwidth]{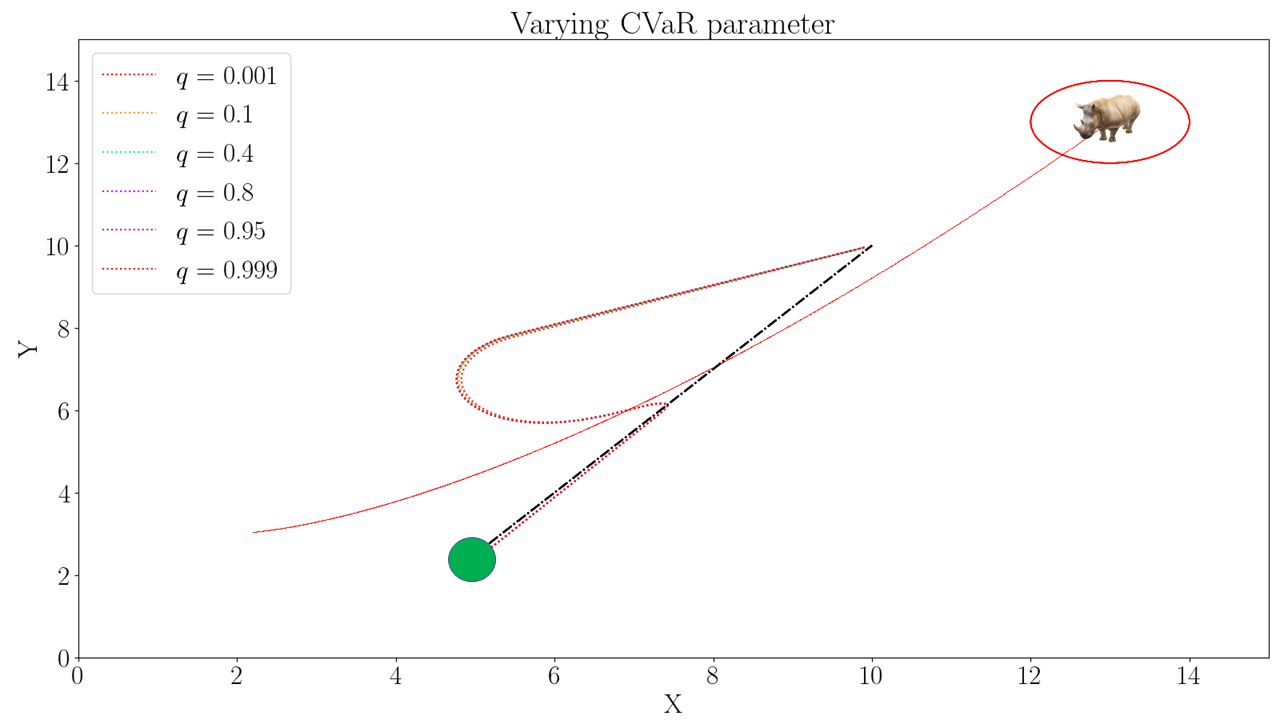}
		\centering
		\caption{Change in paths due to change in $\alpha$ with $\beta=1$}
		\label{fig:path_cvar_uni}
	\end{subfigure}
	
	\begin{subfigure}[t]{0.30\linewidth}
		\includegraphics[width=\textwidth]{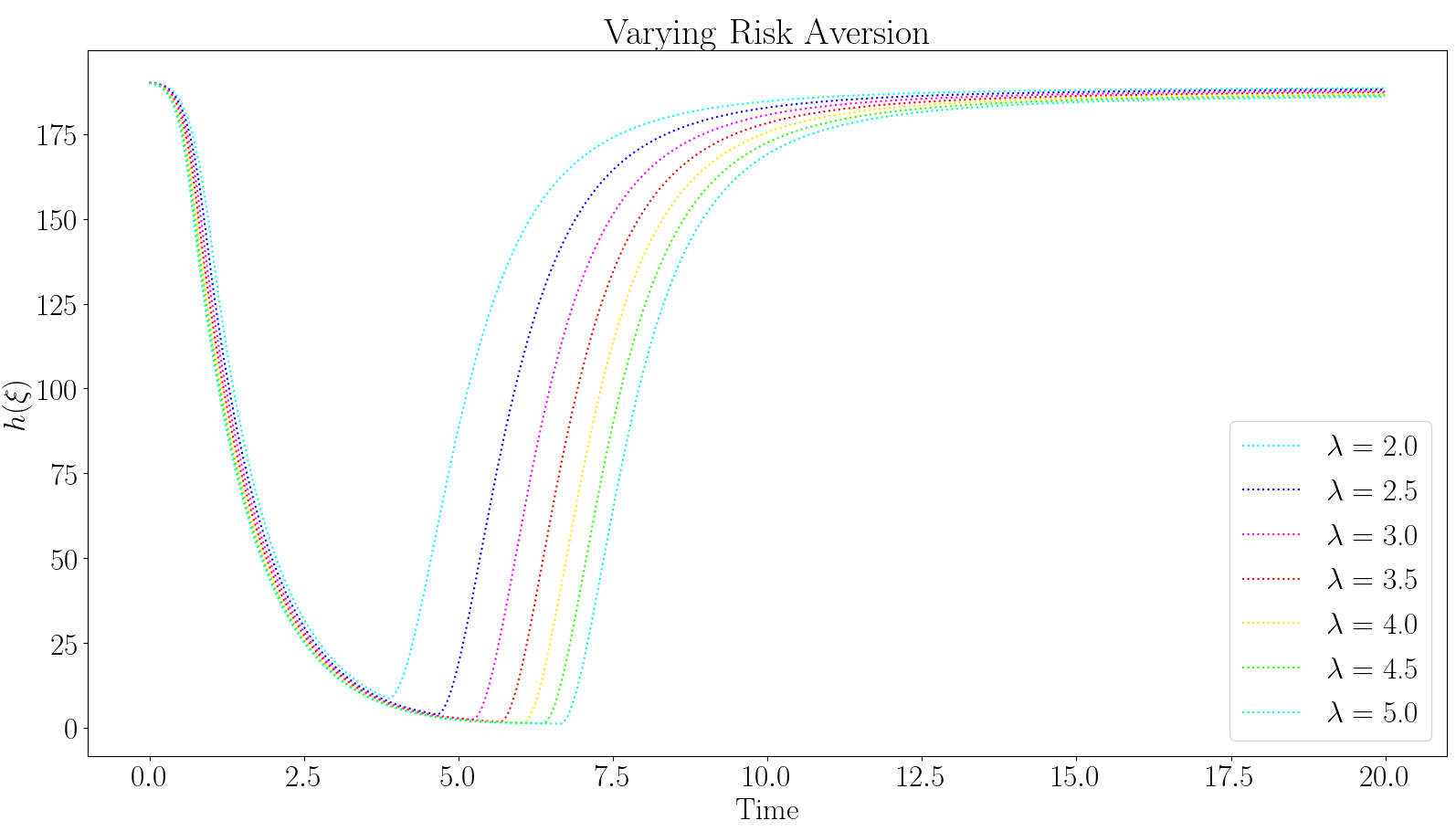}
		\centering
		\caption{Evolution of CBF function $h$ with varying $\lambda$}
		\label{fig:cbf_averse_uni}
	\end{subfigure}
	~
	\begin{subfigure}[t]{0.30\linewidth}
		\includegraphics[width=\textwidth]{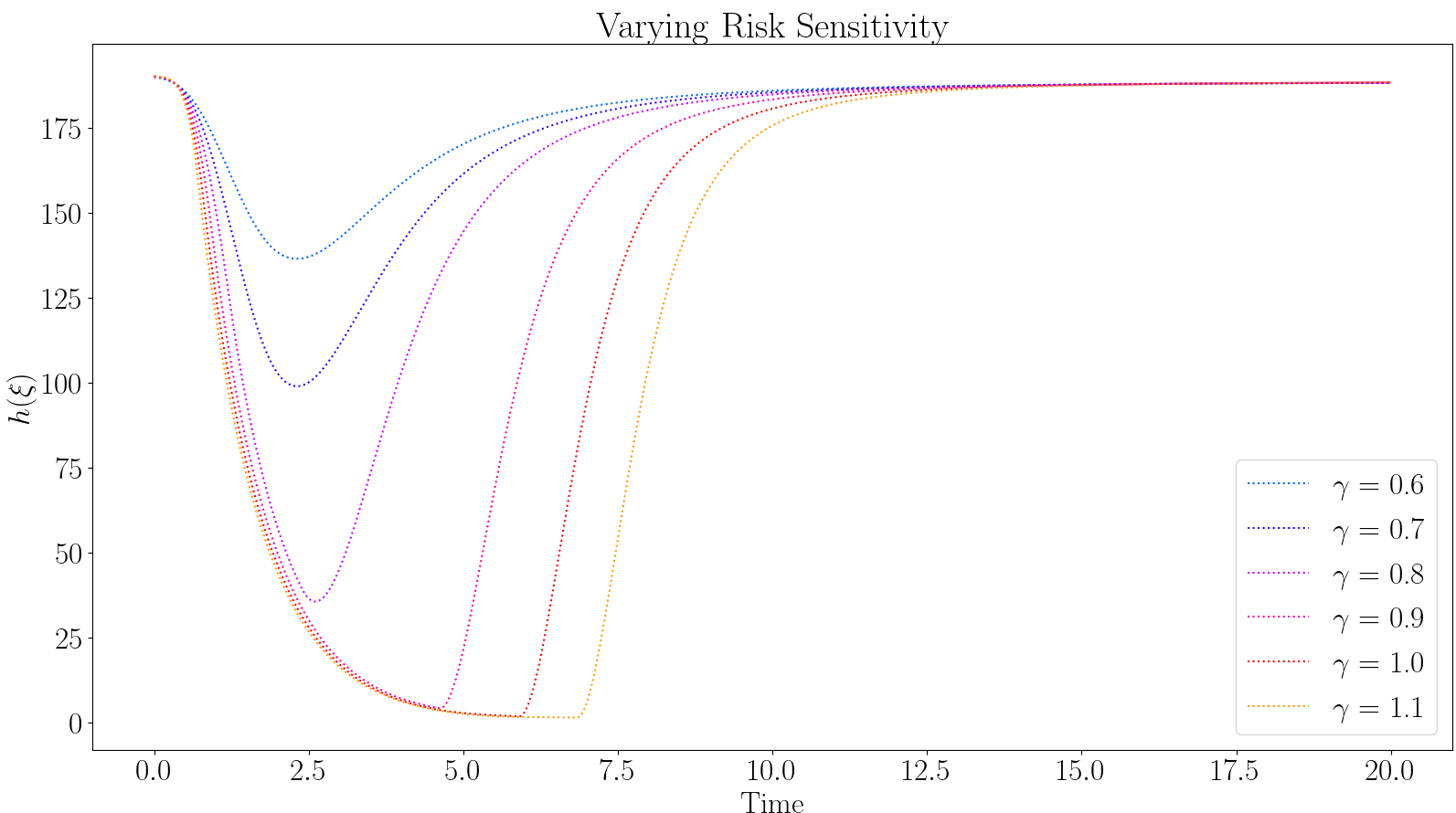}
		\centering
		\caption{Evolution of CBF function $h$ with varying $\gamma$}
		\label{fig:cbf_sensitive_uni}
	\end{subfigure}
	~
	\begin{subfigure}[t]{0.30\linewidth}
		\includegraphics[width=\textwidth]{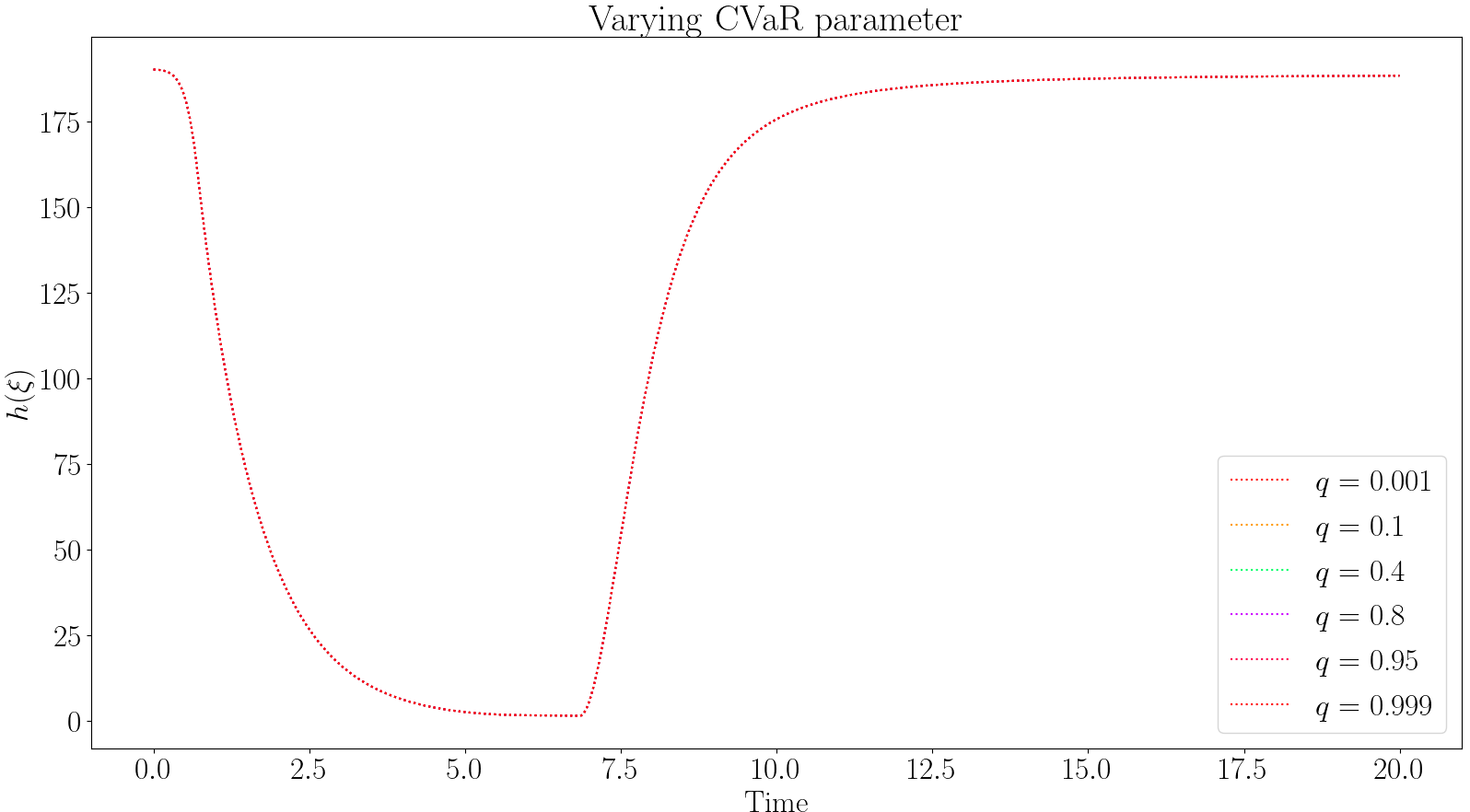}
		\centering
		\caption{Evolution of CBF function $h$ with varying $\kappa$ in CVaR}
		\label{fig:cbf_cvar_uni}
	\end{subfigure}
	\caption[Caption]{Path changes due to variation in risk aversion, risk sensitivity and
		CVaR parameter, with their corresponding CBF ($h$) evolution. The path of the obstacle (red) cross the nominal path of the agent (black) with the uncertainty circle indicated in red. }
	\label{fig:results_traj_unicycle}
\end{figure*}

From (Fig.~\ref{fig:results_traj_unicycle}), 
we see that the agent is able to reach the goal while maintaining 
$h \geq 0$ throughout, implying that perceived safety is maintained 
according to Definition~\ref{def:safety}. Furthermore, as before, 
we see that CPT is able to generate a wider range of paths by tuning 
the risk aversion and risk sensitivity parameter than CVaR, thus 
capturing a greater variety of risk perception, which follows the 
theoretical arguments from Theorem~\ref{thm:inclusiveness} and 
Proposition~\ref{prop:versatility}. 
We also see that the agent also reaches the goal 
owing to the inherent stability properties of the nominal 
controller $k(x)$. 

 Next, we consider an environment where there are three moving obstacles
  present and a single agent. We use the composition approach proposed 
  in~\cite{PG-JC-ME:21-tac} to construct the barrier function to handle 
  multiple obstacles. In this approach, the worst case (closest) 
  obstacle is dealt with first using the $\min$ operator on the barrier 
  functions generated by the corresponding obstacles. Here, the agent has 
  to go from $(-15,-15)$ to $(15,15)$, while the obstacles' starting and 
  goal points are respectively $(-17,0),(0,14),(10,-10)$ and $(17,0),
  (0,-14),(-10,10)$. Nominal controller $k(p)$ is generated with 
  proportional constant $[1.6,1.6]$. The uncertainty radius is 
  $\overline{r}=2.5$ and other cost constants are identical to the 
  single agent setting. The results of varying the risk aversion 
  $\lambda$, risk sensitivity $\gamma$ of CPT and $\kappa$ of CVaR 
  are shown in Fig. \ref{fig:results_traj_multi}.

 \begin{figure*}
	\begin{subfigure}[t]{0.30\linewidth}
 		\includegraphics[width=\textwidth]{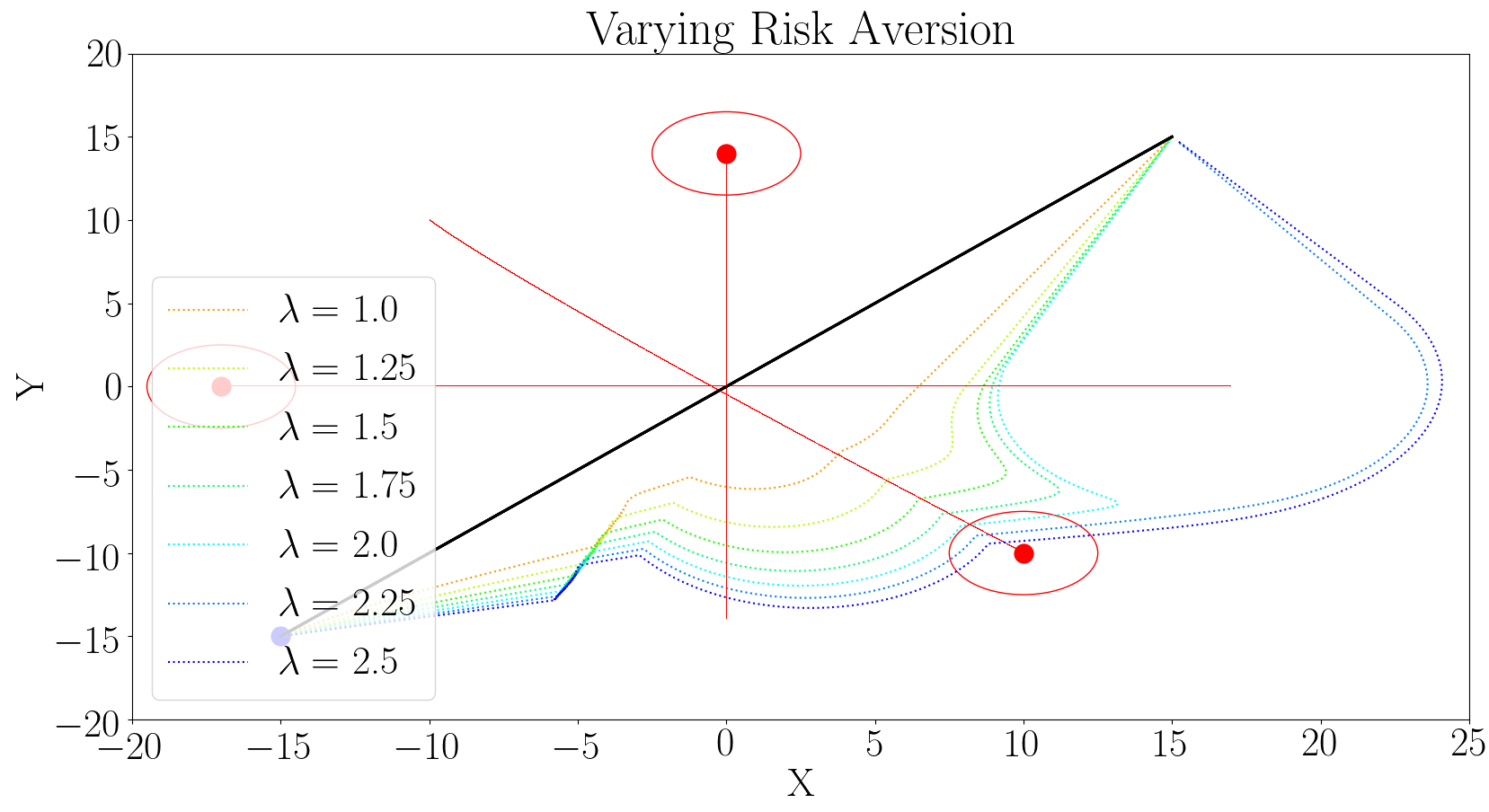}
 		\caption{Change in paths due to change in risk aversion $\lambda$ with $\gamma=0.88$ }
 		\label{fig:path_averse_multi}
 	\end{subfigure}%
	~
 	\begin{subfigure}[t]{0.30\linewidth}
 		\includegraphics[width=\textwidth]{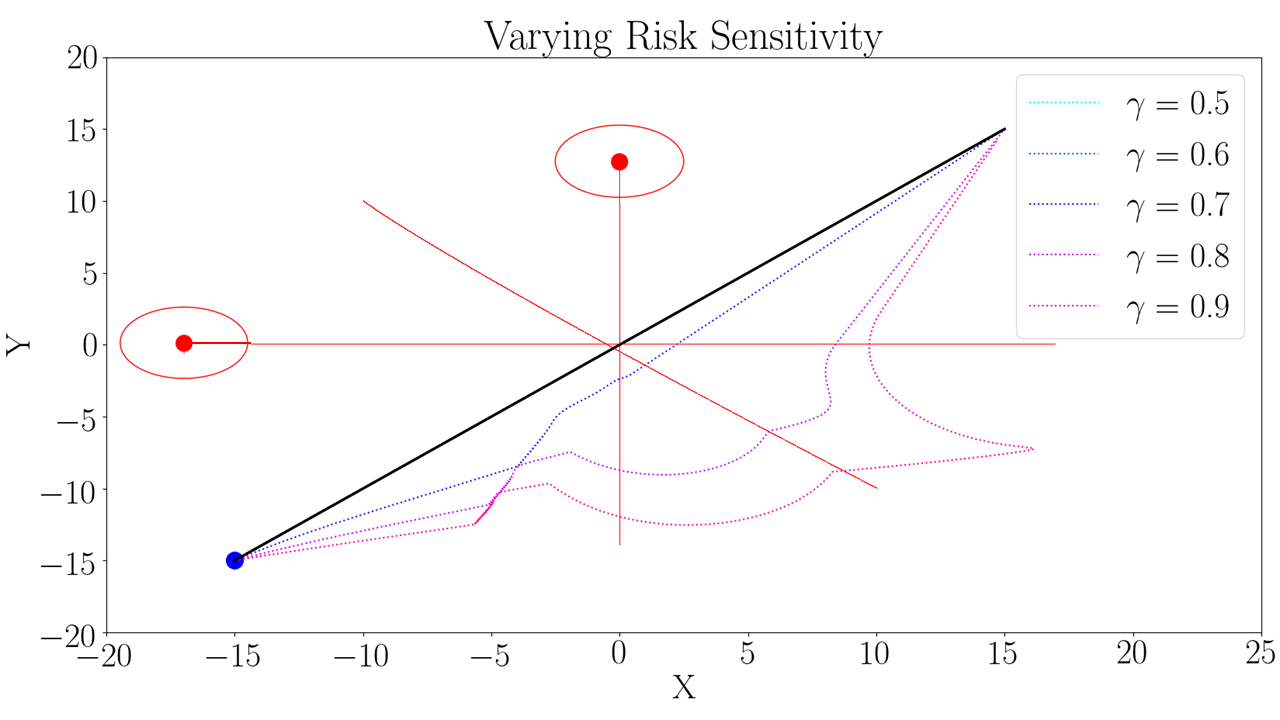}
 		\centering
		\caption{Change in paths due to change in risk sensitivity $\gamma$ with $\lambda=2.25$}
 		\label{fig:path_sensitive_multi}
 	\end{subfigure}
 	~
 	\begin{subfigure}[t]{0.30\linewidth}
 		\includegraphics[width=\textwidth]{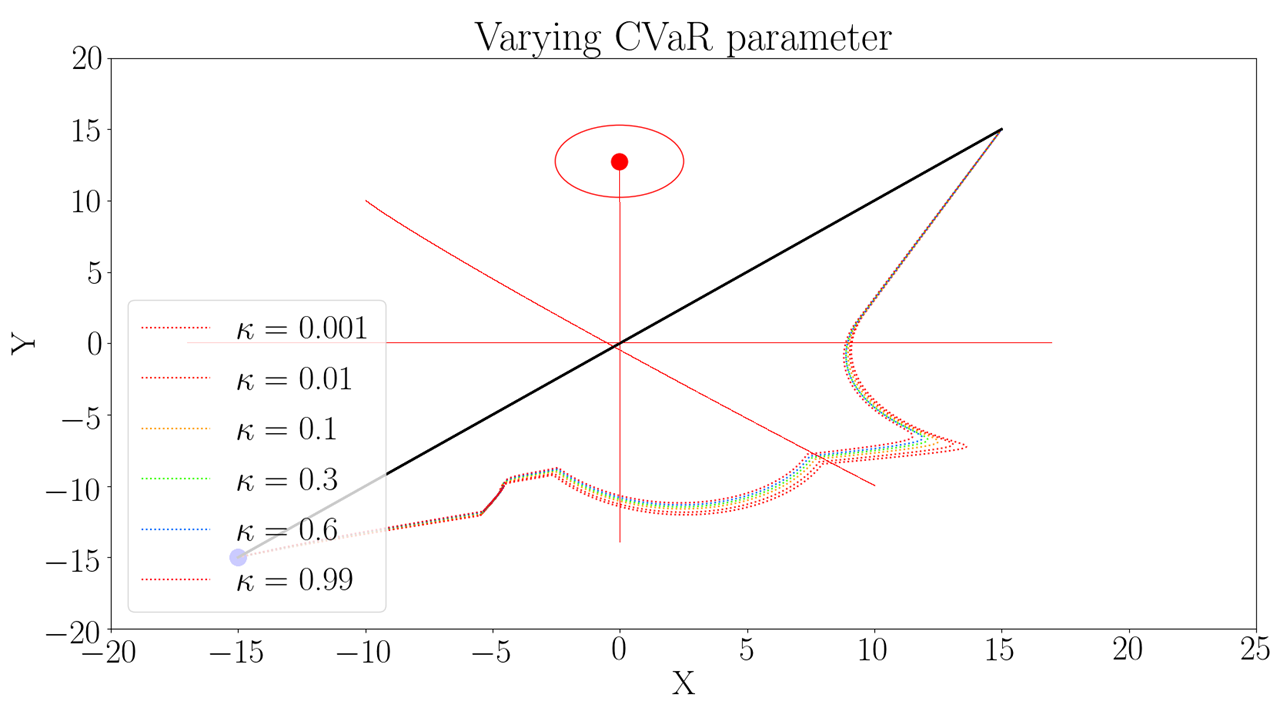}
 		\centering
 		\caption{Change in paths due to change in $\alpha$ with $\beta=1$}
 		\label{fig:path_cvar_multi}
 	\end{subfigure}
	
 	\begin{subfigure}[t]{0.30\linewidth}
 		\includegraphics[width=\textwidth]{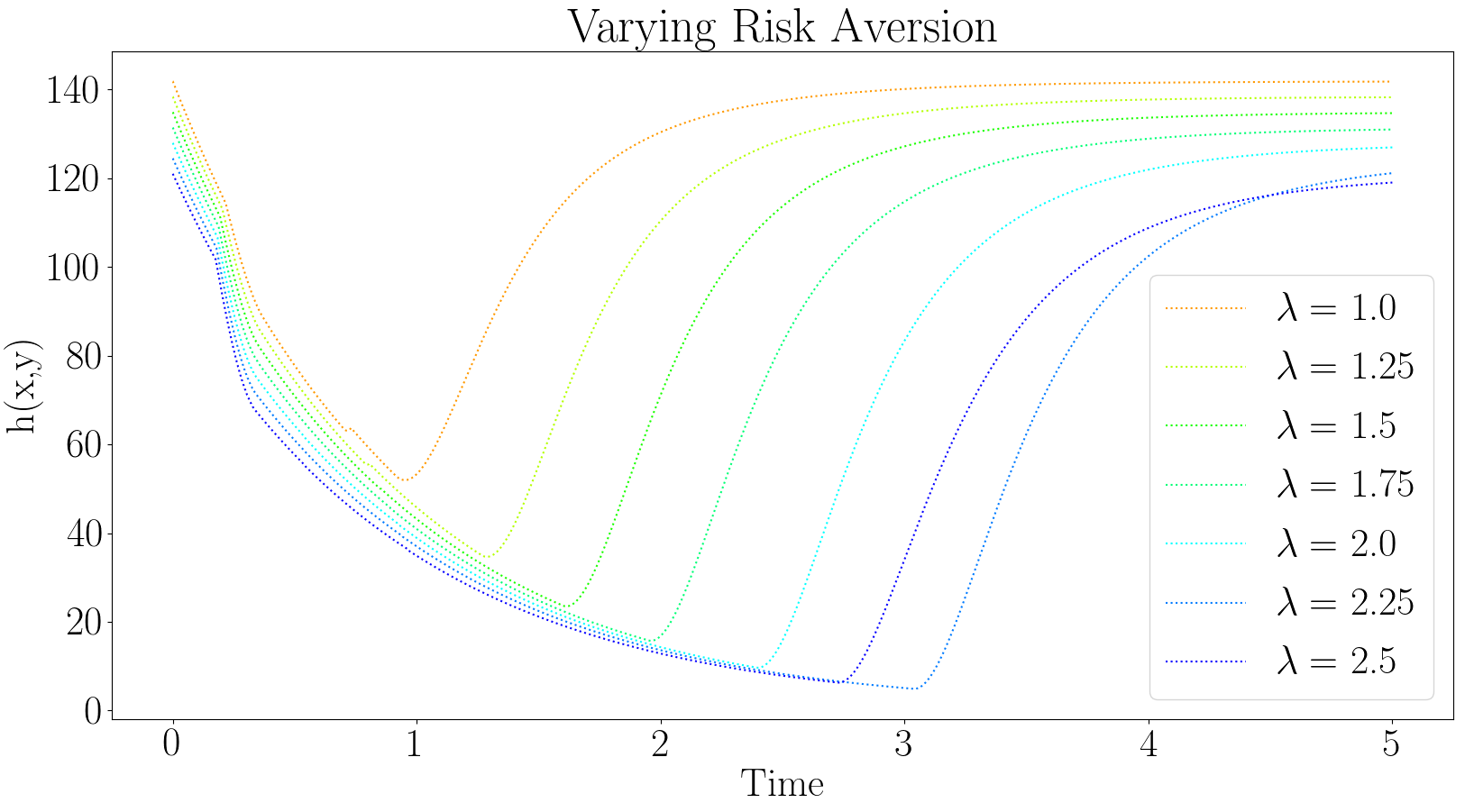}
 		\centering
 		\caption{Evolution of CBF function $h$ with varying $\lambda$}
 		\label{fig:cbf_averse_multi}
 	\end{subfigure}
 	~
	\begin{subfigure}[t]{0.30\linewidth}
 		\includegraphics[width=\textwidth]{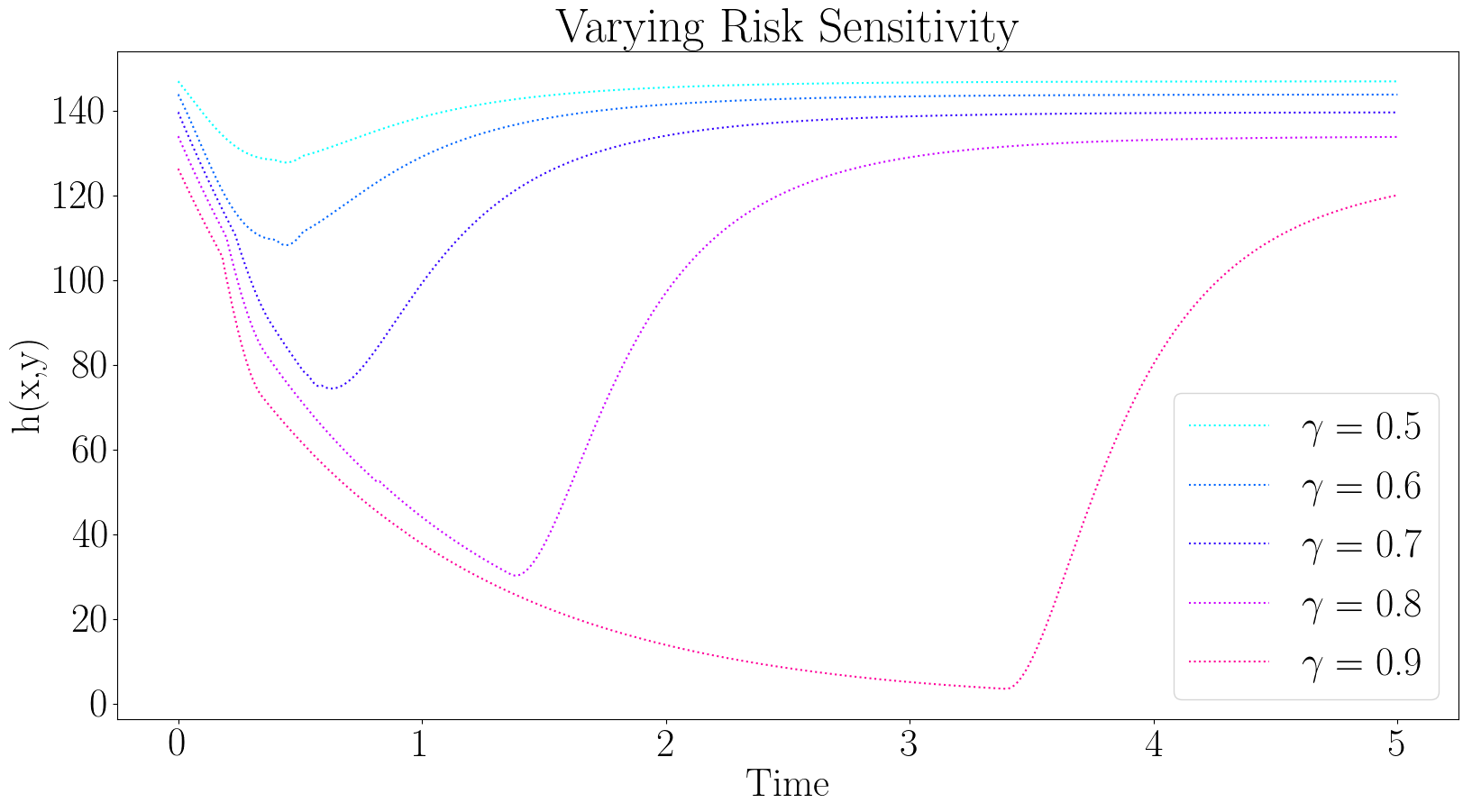}
 		\centering
 		\caption{Evolution of CBF function $h$ with varying $\gamma$}
 		\label{fig:cbf_sensitive_multi}
 	\end{subfigure}
 	~
 	\begin{subfigure}[t]{0.30\linewidth}
 		\includegraphics[width=\textwidth]{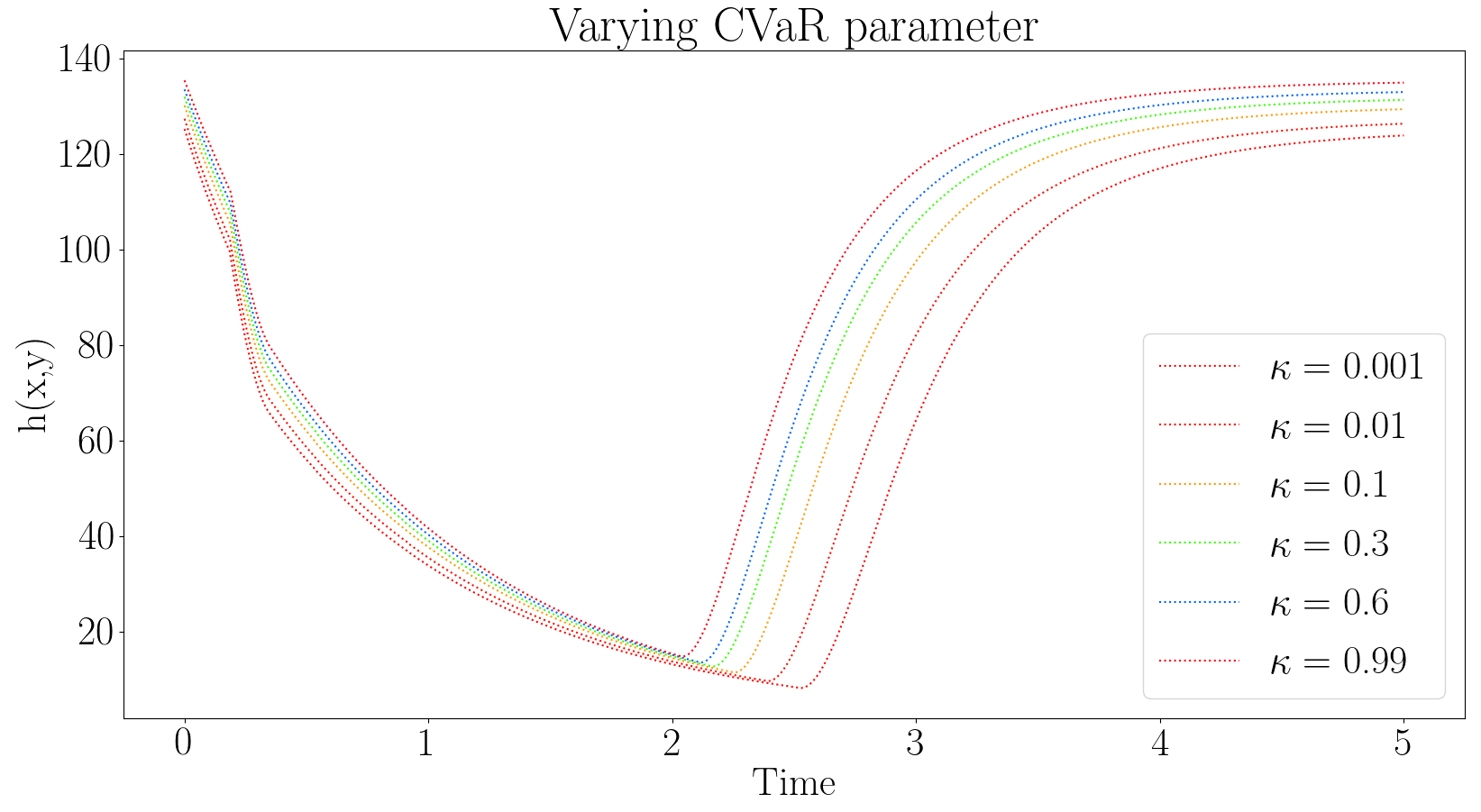}
 		\centering
 		\caption{Evolution of CBF function $h$ with varying $\kappa$ in CVaR}
 		\label{fig:cbf_cvar_multi}
 	\end{subfigure}
 	\caption[Caption]{Path changes due to variation in risk aversion, risk sensitivity and
 		CVaR parameter, with their corresponding CBF ($h$) evolution. The path of the obstacle (red) cross the nominal path of the agent (black) with the uncertainty circle indicated in red. }
 	\label{fig:results_traj_multi}
 \end{figure*}
 Similar to the previous case (Fig.~\ref{fig:results_traj_unicycle}), 
 we see that the agent is able to reach the goal while maintaining 
 $h \geq 0$ throughout, implying that perceived safety is maintained 
 according to Definition~\ref{def:safety}. Furthermore, as before, 
 we see that CPT is able to generate a wider range of paths by tuning 
 the risk aversion and risk sensitivity parameter than CVaR, thus 
 capturing a greater variety of risk perception, which follows the 
 theoretical arguments from Theorem~\ref{thm:inclusiveness} and 
 Proposition~\ref{prop:versatility}. 
 We also see that in both the cases the agent also reaches the goal 
 owing to the inherent stability properties of the nominal 
 controller $k(x)$.

\section{Conclusion and future work}
\label{sec:conclussions}

In this work, we have proposed a novel integration of CPT 
(a non-rational decision making model) 
into a safety-critical control scheme, 
to generate risk-perception-aware (RPA) controls 
(according to a DM's risk profile) in an environment 
embedded with uncertain costs. Thus, opening new avenues to 
incorporate behavioral decision theory into safety-critical controls.
Future directions include the design of learning frameworks to determine 
the risk profile of an observed agent and handling unknown obstacle 
dynamics.  
\bibliographystyle{IEEEtran}
\bibliography{alias,SM,SMD-add,JC}

\end{document}